\newcommand{\xx}{{\bf x}}
\newcommand{\R}{{\mathbb R}}
\newcommand{\N}{{\mathbb N}}
\newcommand{\EE}{{\mathbb E}}
\newcommand{\PP}{{\mathbb P}}
\newcommand{\1}{{\bf 1}}
\newtheorem{defn}{Definition}
\newtheorem{thm}{Theorem}
\newtheorem{prop}{Proposition}
\renewenvironment{proof}{\noindent{\bf Proof:} }{\hfill $\square$ \\}
\begin{document}
\bibliographystyle{plain}

\thispagestyle{empty}
\begin{center}
{\Large \sc
A $J$--function for inhomogeneous point processes}\\[.5in]

\noindent
{\large M.N.M. van Lieshout}\\[.2in]
\noindent
{\em CWI\/} \& {\em Eindhoven University of Technology\/} \\
\mbox{} \\
P.O. Box 94079, 1090 GB Amsterdam, The Netherlands \\[.2in]
\end{center}

\begin{verse}
{\footnotesize
\noindent
{\bf Abstract}\\
\noindent
We propose new summary statistics for intensity-reweighted moment 
stationary point processes that generalise the well known $J$-, empty 
space, and nearest-neighbour distance distribution functions, represent
them in terms of generating functionals and conditional intensities, and 
relate them to the inhomogeneous reduced second moment function. Extensions 
to space time and marked point processes are briefly discussed.\\[0.2in]

\noindent
{\em Keywords \& Phrases:}
conditional intensity, empty space function, generating functional, $J$-function, 
nearest-neighbour distance distribution function, inhomogeneity, intensity-re\-weight\-ed 
moment stationarity, marked point process, minus sampling estimator, product 
density, reduced second moment measure, spatial interaction.

\noindent
{\em 2000 Mathematics Subject Classification:}
60D05, 60G55, 62M30.
}
\end{verse}
\section{Introduction}

The analysis of data in the form of a map of (marked) points often starts 
with the computation of summary statistics. Some statistics are based on 
inter-point distances, others on the average number of points in sample 
regions, or geometric information. For a survey of the state of the art
and pointers to the literature, the reader is referred to the recent handbook 
of spatial statistics \cite{Gelf10}.

In the exploratory stage, it is usually assumed that the data constitute a 
realisation of a stationary point process and deviations from a homogeneous 
Poisson process are studied to suggest a suitable model. Although stationarity
is a convenient assumption, especially if -- as is often the case -- only a 
single map is available, in many areas of application, though,  heterogeneity 
{\em is\/} present. To account for possible non-stationarity, Baddeley {\em 
et al.\/} \cite{BaddetalK00} defined a reduced second moment function by 
considering the random measure obtained from the mapped point pattern by 
weighting each observed point according to the (estimated) intensity at its 
location.  Gabriel and Diggle \cite{GabrDigg09} took this idea further into 
the domain of space time point processes. 

In this paper, our aim is to define an extension of the $J$-function 
\cite{LiesBadd96} that is able to accommodate spatial and/or temporal
inhomogeneity. The idea underpinning the $J$-function is to compare the point 
pattern around a typical point in the map to that around an arbitrarily chosen 
origin in space in order to gain insight in the interaction structure of the point 
process that generated the data. The power of the $J$-function in hypothesis 
testing was assessed in \cite{Chen03} and \cite{ThonLies99}. Extensions to 
multivariate point processes were proposed by \cite{LiesBadd99}, and window 
based $J$-functions suggested by \cite{BaddetalJ00} and \cite{Chen03}. For 
applications in agriculture, astronomy, forestry and geology, see 
\cite{FoxBadd02,Kers98,Kersetal98,Kersetal99,Paul02,Steietal01}.

The plan of this paper is as follows. In Section~\ref{S:prelim} we fix
notation and recall some basic concepts from stochastic geometry. In
Section~\ref{S:stats} we describe the most important summary statistics
that are being used in exploratory analysis of point patterns under the 
assumption of stationarity. Section~\ref{S:J} introduces the new statistic
$J_{\rm{inhom}}$ and gives representations of it in terms of generating 
functionals and conditional intensities. Section~\ref{S:theo} is devoted to 
the explicit computation of $J_{\rm{inhom}}$ for some important classes 
of point process models. In Section~\ref{S:estim} we develop a minus 
sampling estimator and apply it in Section~\ref{S:examples} to simulated 
examples. The paper closes with suggestions for further extensions to 
space time and marked point processes.

\section{Preliminaries and notation}
\label{S:prelim}
Throughout this paper, let $X$ be a simple point process on $\R^d$.
Its {\em intensity measure\/} $\Lambda$ is defined by
\[
  \Lambda(B) = \EE \left[ \sum_{x \in X} 1\{ x \in B \} \right]
\]
for Borel sets $B\subseteq \R^d$. We assume that $\Lambda$ is locally 
finite, i.e.\ $\Lambda(B) < \infty$ whenever $B$ is bounded, and absolutely 
continuous with respect to Lebesgue measure so that
\[
  \Lambda(B) = \int_B \lambda(x) \, dx
\]
for some non-negative measurable function $\lambda$ referred to as
{\em intensity function\/}. Heuristically speaking $\lambda(x) \, dx$ is 
the probability of observing some point in the infinitesimal region $dx$ and
represents the heterogeneity of $X$.

Note that the intensity measure is also known as the {\em first order 
factorial moment measure\/} of $X$. Higher order factorial moment measures 
$\Lambda^{(n)}$, $n\in\N$, are defined by
\[
  \Lambda^{(n)}(B_1 \times \cdots \times B_n) = \EE \left[
  {\sum}^{\neq}_{x_1, \dots, x_n \in X} 1\{ x_1 \in B_1; \dots; x_n\in B_n \}
  \right],
\]
where the superscript $\mbox{}^{\neq}$ indicates that the sum is taken over
all $n$-tuples of distinct points and the $B_i$ are Borel subsets of $\R^d$. 
As the intensity measure, $\Lambda^{(n)}$ is not necessarily locally finite, nor
guaranteed to have a Radon--Nikodym derivative. If $\Lambda^{(n)}$ is 
absolutely continuous with respect to the $n$-fold product of Lebesgue measures,
\[
  \Lambda^{(n)}(B_1 \times \cdots \times B_n) = \int_{B_1} \cdots \int_{B_n} 
  \rho^{(n)}(x_1, \dots, x_k) \, dx_1 \cdots dx_k
\]
for some non-negative measurable function $\rho^{(n)}$ called 
{\em $n$-th order product density\/} of $X$. Note that $\rho^{(n)}$ is
permutation invariant and satisfies the integral equation
\[
  \EE \left[ {\sum}^{\neq}_{x_1, \dots, x_n \in X} g(x_1, \dots, x_n) \right]
=
  \int \cdots \int g(x_1, \dots, x_n) \, \rho^{(n)}(x_1, \dots, x_n) \,
  dx_1 \cdots dx_n
\]
for all non-negative, measurable functions $g \geq 0$. Thus, $\rho^{(n)}(dx_1,
\dots, dx_n)$ may be interpreted as the infinitesimal probability of finding
points of $X$ at each of $dx_1, \dots, dx_n$. For further details, see for
example \cite{Gelf10,Illietal08,Lies00}. 

In the physics literature, {\em $n$-point correlation functions} tend to
be used instead of product densities \cite{Peeb80}.  They are defined
recursively by
\begin{eqnarray*}
\nonumber
\xi_1 & \equiv & 1; \\
\frac{\rho^{(n)}(x_1, \dots, x_n)}{\lambda(x_1) \cdots \lambda(x_n)} & = &
 \sum_{k=1}^n \sum_{D_1, \dots, D_k}
 \xi_{n(D_1)}(\xx_{D_1}) \cdots \xi_{n(D_k)}(\xx_{D_k}),
\end{eqnarray*}
where the last sum ranges over all partitions $\{ D_1, \dots, D_k \}$ of
$\{ 1, \dots, n \}$ in $k$ non-empty, disjoint sets, and
the $\xx_{D_j} = \{ x_i : i \in D_j \}$, $j=1, \dots, k$ form the 
corresponding partition of points.  Since for a Poisson point process 
$\xi_n \equiv 0$ for $n>1$, heuristically speaking $n$-point correlation 
functions account for the excess due to $n$-tuples in comparison to a Poisson 
point process with the same intensity function.

\section{Summary statistics}
\label{S:stats}

Summary statistics are used by spatial statisticians as tools for exploratory
data analysis, testing, and model validation purposes. Popular examples include
the {\em nearest neighbour distance distribution function\/} $G$, the {\em
empty space function\/} $F$, the {\em reduced second moment function\/}
$K$ and the $J$-function. More specifically, for a stationary point process
$X$ with intensity $\lambda > 0$,  
\begin{equation}
\label{e:stats}
\left\{
\begin{array}{lll}
F(t) & = & \PP( X \cap B(0,t) \neq \emptyset ), \\
G(t) & = & \PP^{!0}( X \cap B(0,t) \neq \emptyset ), \\
K(t) & = & \EE^{!0} \left[ 
            \sum_{x\in X} 1\{ x \in B(0,t) \} \right] / \lambda, \\
J(t) & = & \left(1-G(t)\right) / \left( 1-F(t) \right),
\end{array} \right.
\end{equation}
where $B(0,t)$ is the closed ball of radius $t\geq 0$ centred at the origin and
$\PP^{!0}$ denotes the reduced Palm distribution of $X$. For further details
about these and other summary statistics, see for example \cite{Illietal08}.
Note that the $J$-function is defined only for $t$ such that $F(t) < 1$. 
Values larger than one indicate inhibition, whereas $J(t) < 1$ suggests 
clustering, but note the caveats against drawing too strong conclusions 
in \cite{BedfBerg97}.

All statistics defined in (\ref{e:stats}) can be expressed in terms of product 
densities. Indeed, if the second order factorial moment measure exists as a locally
finite measure with Radon--Nikodym derivative $\rho^{(2)}(x_1, x_2)
= \rho^{(2)}(||x_1 - x_2||)$,
\[
  K(t) = \int_{B(0,t)} \frac{\rho^{(2)}(||x||)}{\lambda^2} \, dx
       = \int_{B(0,t)} \left( 1 + \xi_2(||x||) \right) dx.
\]
Clearly, $K$ depends only on product densities up to order two.
In contrast, the empty space function depends on product densities
of all orders \cite{Whit79}, 
\[
  F(t) = - \sum_{n=1}^\infty \frac{ (-1)^n }{ n! }
  \int_{B(0,t)} \cdots \int_{B(0,t)} \rho^{(n)}( x_1, \dots, x_n ) \,
  dx_1 \cdots dx_n 
\]
provided all order product densities exist and the series is absolutely 
convergent, that is, 
$\limsup_{n\to\infty} \left(
  \frac{\Lambda^{(n)}(B(0,t)^n)}{n!} \right)^{1/n} < 1$.
Similarly,
\[
  G(t) = - \sum_{n=1}^\infty \frac{(-1)^n}{n!} 
    \int_{B(0,t)} \cdots \int_{B(0,t)}
    \frac{\rho^{(n+1)}( 0, x_1, \dots, x_n )}{\lambda}\, dx_1 \cdots dx_n,
\]
provided that the series is absolutely convergent. Thence \cite{Lies06},
\[
  J(t) =  1 +  \sum_{n=1}^\infty \frac{(-\lambda)^n}{n!} J_{n}(t)
\]
for all $t \geq 0$ for which $F(t)<1$, where $J_{n}(t) =
\int_{B(0,t)} \cdots \int_{B(0,t)}
\xi_{n+1}( 0, x_1, \dots, x_n) \, dx_1 \cdots dx_n$.
If product densities of all orders do not exist, one may truncate the series.
Indeed, using only product densities up to second order gives
\[
  J(t) - 1 \approx - \lambda \left( K(t) - | B(0,t) | \right)
\]
so the $K$-function can be seen as a second order approximation to the 
$J$-function.

For non-stationary point processes, the definitions in (\ref{e:stats}) depend
on the choice of origin and adaptations are called for. To this end, Baddeley 
{\em et al.\/} \cite{BaddetalK00} introduced the notion of {\em second 
order intensity-reweighted stationarity\/}. A point process $X$ possesses 
this property if the random measure
\[
   \Xi = \sum_{x\in X} \frac{ \delta_x }{ \lambda(x) }
\]
is second-order stationary. Here, $\delta_x$ denotes the Dirac measure
that places a single point at $x$. Clearly if $\Xi$ is stationary, it is
also second-order stationary but the converse does not hold. Examples
of second order intensity-reweighted stationary point processes 
include Poisson point processes, the random thinning of a stationary
point process, and log Gaussian Cox processes driven by a Gaussian 
random field with a translation invariant covariance function.
Cluster processes, as well as more general superposition processes,
typically are not second order intensity-reweighted stationary.

For a second order intensity-reweighted stationary point process, 
an inhomogeneous K-function \cite{BaddetalK00} can be defined by
\[
  K_{\rm{inhom}}(t) := \frac{1}{|B|} \EE \left[ {\sum}^{\neq}_{x, y\in X}
     \frac{1_B(x) \, 1\left\{ y \in {B(x,t)} \right\} }{ 
     \lambda(x) \, \lambda(y) } 
  \right]
\]
regardless of the choice of bounded Borel set $B\subset\R^d$ and using
the convention $a/0=0$ for $a\geq 0$. Indeed, 
\(
K_{\rm{inhom}}(t) = {\cal{K}}_{\Xi}(B(0,t)\setminus\{ 0 \}),
\)
where ${\cal{K}}_{\Xi}$ is the reduced second moment measure of
the random measure $\Xi$.

Gabriel and Diggle \cite{GabrDigg09} restrict themselves to point
processes $X$ that are simple and have locally finite moment measures
of first and second order. Additionally they assume that $X$ has an 
intensity function $\lambda$ that is bounded away from zero 
and a pair correlation function 
\[
g(x,y) = g(|| x - y ||) = \frac{\rho^{(2)}(x,y)}{\lambda(x)\, \lambda(y)}
\]
that depends only on $|| x - y ||$.
Clearly in this case 
\[
K_{\rm{inhom}}(t) = \frac{1}{|B|}  \int_B \int_{B(0,t)} g(||z||) \, dz \, dx
= \int_{B(0,t)} g(||z||) \, dz,
\]
which for any inhomogeneous planar Poisson process reduces to $\pi t^2$.

Baddeley {\em et al.\/} \cite{BaddetalK00} briefly discuss how to define
empty space and nearest neighbour distance distribution functions for 
inhomogeneous point processes. First, for given $x\in\R^d$ and $t\geq 0$,
they propose to determine $r(x,t)$ by solving
\[
t = \int_{B(x, r(x,t))} \lambda(y) \, dy,
\]
then set
\begin{eqnarray*}
F_x(t) & = & \PP( d(x, X) \leq r(x,t) ) \\
G_x(t) & = & \PP^{!x}( d(x, X) \leq r(x,t) ),
\end{eqnarray*}
where $d(x,X)$ denotes the shortest distance from $x$ to a point of $X$.
For Poisson processes, the above definitions do not depend on $x$ and are 
both equal to $1-e^{-t}$. The obvious drawback of such an approach is that 
$r(x,t)$ may be hard to compute in practice.  Moreover, the definitions 
depend on $x$ as well as $t$. Our goal in the present paper is to give an 
alternative definition of $F$, $G$, and $J$ for intensity-reweighted
moment stationary point processes based on their representation in 
terms of product densities that does not depend on the choice of origin
and is easy to use in practice.

\section{Inhomogeneous $J$-function}
\label{S:J}

Let $X$ be a simple point process on $\R^d$  whose intensity function 
$\lambda$ exists and is bounded away from zero with $\inf_x 
\lambda(x) = \bar\lambda >0$. Assume that for all $n\in\N$ the 
$n^{\rm{th}}$ order factorial moment measure exists as a locally finite measure
and has a Radon--Nikodym derivative $\rho^{(n)}$ with respect to the 
$n$-fold product of Lebesgue measure $\ell$ with itself for which the 
corresponding $n$-point correlation function $\xi_n$ is translation 
invariant, that is, 
$\xi_n(x_1 + a, \dots, x_n + a) = \xi_n(x_1, \dots, x_n)$ 
for almost all $a\in \R^d$. We shall call such a point process {\em
intensity-reweighted moment stationary\/}. Note that a fortiori $X$ is
second order intensity-reweighted stationary. Moreover, a
stationary point process is also intensity-reweighted moment stationary.

\begin{defn}
\label{d:J}
Let $X$ be an intensity-reweighted moment stationary point process. Set
\[
J_n(t) = \int_{B(0,t)} \cdots \int_{B(0,t)}
\xi_{n+1}( 0, x_1, \dots, x_n) \, dx_1 \cdots dx_n
\]
and define
\[
  J_{\rm{inhom}}(t) = 1 + 
  \sum_{n=1}^\infty \frac{(-\bar \lambda)^n}{n!} J_n(t),
\]
for all $t\geq 0$ for which the series is absolutely convergent, that is, for which
$\limsup_{n\to\infty}$ $ \left( \frac{\bar\lambda^n}{n!} |J_n(t)| \right)^{1/n}
 < 1$.
\end{defn}

A few special cases deserve to be mentioned. For a Poisson point process 
with intensity function $\lambda(\cdot)$, as the $n$-point correlation functions 
vanish for $n>1$, so do the $J_n$ whence $J_{\rm{inhom}}(t) \equiv 1$ for 
all $t\geq 0$. Furthermore, if $X$ is stationary, $\bar\lambda = \lambda$ 
and by \cite[Prop.~4.2]{Lies06}, $J_{\rm{inhom}} \equiv J$. 

Like in the stationary case considered in Section~\ref{S:stats}, the series in
Definition~\ref{d:J} may be truncated, for example when $X$ is only 
second order intensity-reweighted stationary or not all $n$-point correlation 
functions exist. For $n=1$, we obtain
\[
J_{\rm{inhom}}(t) - 1 \approx - \bar\lambda \int_{B(0,t)} \xi_2(0,x) \, dx
 = - \bar \lambda \left( K_{\rm{inhom}}(t) - | B(0,t) | \right).
\]

In the remainder of this section, we rewrite $J_{\rm{inhom}}$ in terms of
generating functionals and conditional intensities. Recall that for any function
$v:\R^d \to [0,1]$ that is measurable and identically $1$ except on some
bounded subset of $\R^d$, the generating functional at $v$ is defined as
\[
G(v) = \EE \left[ \prod_{x\in X} v(x) \right],
\]
where by convention an empty product is taken to be $1$. The distribution
of $X$ is determined uniquely by its generating functional 
\cite[Prop.~7.4.II]{DaleVere88}. The factorial moment measures, provided they 
exist as locally finite measures, can be derived from the generating 
functional using its Taylor expansion \cite[Prop.\ 7.4.III]{DaleVere88}. Conversely, 
if product densities of all orders exist, let $u$ be a measurable function
with values in $[0,1]$ that has bounded support. Then 
\[
G(v:= 1 - u) = 1 + \sum_{n=1}^\infty \frac{(-1)^n}{n!} 
\int \cdots \int u(x_1) \cdots u(x_n) \, \rho^{(n)}(x_1, \dots, x_n) \,
dx_1 \cdots dx_n,
\] 
provided the series converges \cite[p.~109]{Stoyetal87}.

\begin{thm}
\label{t:gfl}
Write, for $t \geq 0$ and $a\in \R^d$,
\[
u_t^a(x) = \frac{ \bar\lambda \, 1\{ x \in B(a,t) \} }{ \lambda(x) } 
\]
and assume that 
\(
\limsup_{n\to\infty} \left( 
\frac{\bar\lambda^n}{n!}  \int_{B(0,t)}\cdots \int_{B(0,t)} 
\frac{\rho^{(n)}(x_1, \dots, x_n)}{\lambda(x_1) \cdots \lambda(x_n)} 
dx_1 \cdots dx_n \right)^{1/n} < 1 .
\)
Under the assumptions of Definition~\ref{d:J}, for almost all $a\in \R^d$,
\[
J_{\rm{inhom}}(t) = \frac{ G^{!a}\left( 1 - u_t^a \right) }
{ G\left( 1 - u_t^0 \right) }
\]
for all $t\geq 0$ for which the denominator is non-zero, where $G^{!a}$ is 
the generating functional of the reduced Palm distribution $\PP^{!a}$ at $a$,
$G$ that of $\PP$ itself. 
\end{thm}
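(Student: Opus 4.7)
The plan is to expand both sides in power series via the Taylor representation of the generating functional in terms of product densities, and to factor the numerator as a Cauchy product by invoking the partition formula linking the normalised product densities $\phi_n(x_1,\dots,x_n):=\rho^{(n)}(x_1,\dots,x_n)/(\lambda(x_1)\cdots\lambda(x_n))$ to the $n$-point correlation functions $\xi_n$.

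Setting $w_t^a(x):=u_t^a(x)\,\lambda(x)=\bar\lambda\,\1\{x\in B(a,t)\}$, the expansion quoted from \cite{Stoyetal87} gives
\[
G(1-u_t^a)=\sum_{n\geq 0}\frac{(-1)^n}{n!}\int w_t^a(x_1)\cdots w_t^a(x_n)\,\phi_n(x_1,\dots,x_n)\,dx_1\cdots dx_n,
\]
whose absolute convergence is exactly the limsup hypothesis of the theorem, while the standard Palm identity $\rho^{(n)}_{!a}(\xx)=\rho^{(n+1)}(a,\xx)/\lambda(a)$ (valid for a.e.\ $a$) leads to the analogous expansion for $G^{!a}(1-u_t^a)$ with $\phi_n$ replaced by $\phi_{n+1}(a,\cdot)$. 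In the numerator integrand I would then apply the partition recursion, splitting each partition of $\{0,1,\dots,n\}$ according to the block containing the index $0$, to obtain
\[
\phi_{n+1}(a,x_1,\dots,x_n)=\sum_{S\subseteq\{1,\dots,n\}}\xi_{|S|+1}(a,\xx_S)\,\phi_{n-|S|}(\xx_{\{1,\dots,n\}\setminus S}).
\]

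Integrating this identity against the symmetric function $w_t^a(x_1)\cdots w_t^a(x_n)$ and grouping subsets by cardinality $k=|S|$, the binomial $\binom{n}{k}$ from the choice of $S$ combines with $(-1)^n/n!$ to produce a Cauchy product
\[
G^{!a}(1-u_t^a)=C(a,t)\cdot G(1-u_t^a),
\]
where $C(a,t)=\sum_{k\geq 0}\frac{(-\bar\lambda)^k}{k!}\int_{B(a,t)^k}\xi_{k+1}(a,x_1,\dots,x_k)\,dx_1\cdots dx_k$. Translation invariance of $\xi_{k+1}$ permits the substitution $y_i=x_i-a$ in each term to identify $C(a,t)=J_{\rm inhom}(t)$, and the same invariance propagates through the partition formula to every $\phi_n$, giving $G(1-u_t^a)=G(1-u_t^0)$ for a.e.\ $a$; dividing yields the claim.

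I expect the main obstacle to be the Cauchy-product rearrangement --- essentially an ``anchored'' variant of the moment-cumulant exponential formula in which one variable is held fixed at $a$ --- whose bookkeeping must be justified by absolute convergence of the constituent series (the denominator factor via the stated hypothesis, and the prefactor $C(a,t)$ via Definition \ref{d:J}). The ``for almost all $a$'' qualifier throughout reflects the fact that both the Palm product density formula and the translation invariance of the $\xi_n$ hold only off Lebesgue-null sets in $a$.
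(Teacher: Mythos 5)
Your proposal is correct and follows essentially the same route as the paper: expand both generating functionals as power series in the normalised product densities, use the Campbell--Mecke/Palm identity for the numerator, split the partition formula for $\xi$'s according to the block containing the anchored point $a$, and factor the resulting series as a Cauchy product whose first factor is $J_{\rm inhom}(t)$ by translation invariance. The only cosmetic difference is that the paper derives the Palm expansion directly from the Campbell--Mecke formula applied to suitable test functions rather than quoting the Palm product-density identity, and it leaves the location-independence of the denominator implicit.
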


Note that for a stationary point process, $u_t^a(x) = 1\{ x\in B(a,t) \}$, hence
\[
G\left( 1 - u_t^a \right) = \PP( X \cap B(a,t) = \emptyset )
= 1 - F(t).
\]
Therefore, the generating functional in the denominator can be interpreted as the 
inhomogeneous counterpart of the empty space function.  A similar interpretation 
holds for the numerator in terms of the nearest neighbour distance distribution 
function and one retrieves the classic definition of the $J$-function given in
Section~\ref{S:stats}. At this point it should be emphasised that the numerator 
and denominator in the definiton of $J_{\rm{inhom}}$ generalise respectively the 
nearest neighbour distance distribution function and empty space function.

\mbox{}

\begin{proof}
We begin by showing that 
\[
\EE^{!x}\left[ {\sum}^{\neq}_{x_1, \dots, x_n\in X} \prod_{i=1}^n
\frac{ 1\{ x_i \in B(x,t) \} }{\lambda(x_i) }  \right]
=
\int_{B(0,t)} \cdots \int_{B(0,t)} \frac{\rho^{(n+1)}(0, x_1, \dots x_n)}
  {\lambda(0) \; \lambda(x_1) \cdots \lambda(x_n) } \; dx_1 \dots dx_n
\]
for almost all $x\in\R^d$. To see this, consider the functions
\[
g_A(x, X) = \frac{1\{x\in A\}}{\lambda(x)} \;
{\sum}^{\neq}_{x_1, \dots, x_n\in X}
\prod_{i=1}^n \frac{ 1\{ x_i \in B(x,t) \} }{\lambda(x_i) }  
\]
defined for all bounded Borel sets $A \subset \R^d$. By the
definition of Palm distributions and the Campbell--Mecke formula,
\[
\int \int g(x, \varphi) \; \lambda(x) \; d\PP^{!x}(\varphi) \; dx =
\EE\left[
\sum_{x\in X} g(x, X\setminus\{ x \} \right] .
\]
Using Fubini's theorem, for our choice of $g$ the left hand side 
can be written as
\[
 \int_A \EE^{!x}  \left[ 
{\sum}^{\neq}_{x_1, \dots, x_n\in X} \prod_{i=1}^n
\frac{ 1\{ x_i \in B(x,t) \} }{\lambda(x_i) }  \right] \; dx
\]
while the right hand side is equal to
\[
\EE\left[ 
{\sum}^{\neq}_{x, x_1, \dots, x_n} \frac{ 1\{ x\in A \} }{\lambda(x) }
 \; 
\prod_{i=1}^n \frac{ 1\{ x_i \in B(x,t) \} }{\lambda(x_i) }  
\right] .
\]
The expectation can be computed in terms of $\rho^{(n+1)}$ and equals
\[
\int_A \int_{B(x,t)} \cdots \int_{B(x,t)} \frac{\rho^{(n+1)}(x, x_1, \dots x_n)}
  {\lambda(x) \; \lambda(x_1) \cdots \lambda(x_n) } \; dx\; dx_1 \cdots dx_n 
= 
\]
\[
\int_A \int_{B(0,t)} \cdots \int_{B(0,t)} \frac{\rho^{(n+1)}(0, x_1, \dots x_n)}
  {\lambda(0) \; \lambda(x_1) \cdots \lambda(x_n) } \; dx\; dx_1 \cdots dx_n
\]
by the translation invariance of the $n$-point correlation functions. Hence
\[
\EE^{!x}\left[ {\sum}^{\neq}_{x_1, \dots, x_n\in X} \prod_{i=1}^n
\frac{ 1\{ x_i \in B(x,t) \} }{\lambda(x_i) }  \right]
\]
is constant for almost all $x\in\R^d$.

Next, note that
\[
\prod_{x\in X} \left( 1 - \frac{ \bar \lambda \, 1\{ x \in B(a,t) \} 
  }{\lambda(x)} \right)
= 
1 + \sum_{n=1}^\infty \frac{(- \bar \lambda)^n}{n!} 
{\sum}^{\neq}_{x_1, \dots, x_n \in X} \prod_{i=1}^n
\frac{1\{ x_i \in B(a,t)\} }{\lambda(x_i)}.
\]
Since the number of points in $X \cap B(a,t)$ is almost surely finite,
the expressions are well-defined under the convention that an empty 
product takes the value one.  Consequently, for almost all $a$,
\begin{equation}
\label{e:Gpalm}
 G^{!a}( 1 - u_t^a )   =  
 1 + \sum_{n=1}^\infty \frac{(-\bar \lambda)^n}{n!}
\int_{B(0,t)} \cdots
\int_{B(0,t)} \frac{ \rho^{(n+1)}(0, x_1, \dots, x_n) }{
\lambda(0) \; \lambda(x_1) \cdots \lambda(x_n)} \, dx_1 \cdots dx_n
\end{equation}
provided the power series in the right hand side is absolutely convergent.

By the discussion preceeding the statement of the theorem,
\begin{equation}
\label{e:G}
 G( 1 - u_t^0 )   =  
 1 + \sum_{n=1}^\infty \frac{(-\bar \lambda)^n}{n!}
\int_{B(0,t)} \cdots
\int_{B(0,t)} \frac{ \rho^{(n)}(x_1, \dots, x_n) }{
\lambda(x_1) \cdots \lambda(x_n)} \, dx_1 \cdots dx_n,
\end{equation}
since the power series in the right hand side is assumed to be absolutely convergent.

Upon recalling the definition of the $n$-point correlation functions and 
splitting into terms that do or do not contain the origin, one obtains that the 
right hand side of (\ref{e:Gpalm}) is equal to
\[
 1 + \sum_{n=1}^\infty \frac{(-\bar \lambda)^n}{n!}
\sum_{D\subseteq \{ 1, \dots, n \}} J_{n(D)}(t) \sum_{k=1}^{n-n(D)}
\sum_{ \stackrel{D_1, \dots, D_k \neq \emptyset \mbox{ disjoint}}{\cup D_j = 
\{1, \dots, n\} \setminus D} } I_{n(D_1)} \cdots I_{n(D_k)} 
\]
(with $\sum_{k=1}^0 = 1$) which in turn can be written as
\begin{equation}
\label{e:gfl}
\left[ 1 + \sum_{n=1}^\infty \frac{(-\bar \lambda)^n}{n!}
 J_{n}(t) \right]
\times 
\left[ 1 + \sum_{m=1}^\infty \frac{(-\bar \lambda)^m}{m!} \sum_{k=1}^m
\sum_{ \stackrel{D_1, \dots, D_k \neq \emptyset \mbox{ disjoint}}{\cup D_j = \{1, \dots, m\}}}
 I_{n(D_1)} \cdots I_{n(D_k)} \right]
\end{equation}
where 
\[
I_n = \int_{B(0,t)} \cdots \int_{B(0,t)}
      \xi_n(x_1, \dots, x_n) \,
      dx_1 \cdots dx_n
\]
and $n(D)$ denotes the cardinality of the set $D$.  The sum 
over $k$ in the rightmost term of (\ref{e:gfl}) can be written as
\[
\int_{B(0,t)} \cdots \int_{B(0,t)} \frac{\rho^{(m)}(x_1, \dots, x_m)
}{\lambda(x_1) \cdots \lambda(x_m)} \, dx_1 \cdots dx_m,
\]
hence the second term in (\ref{e:gfl}) is equal to the right hand side of (\ref{e:G}).
Finally, since both sums in (\ref{e:gfl}) are absolutely convergent, so is 
(\ref{e:Gpalm}), an observation that completes the proof.
\end{proof}


Next, we focus our attention on {\em conditional intensities\/} $\lambda(x;X)$, 
$x\in\R^d$. Assuming they exist, they are defined in integral terms by 
\[
\EE\left[ \sum_{x\in X} g(x, X\setminus \{ x \} ) \right] =
\int \EE^{!x}\left[ g(x, X) \right]  \lambda(x) \,dx = 
\int \EE\left[ g(x, X) \lambda(x; X) \right] dx
\]
for any non-negative measurable function $g$.

\begin{thm}
\label{t:ci}
Assume that $X$ admits a conditional intensity and define the random
variable $W_a(X) := \prod_{x\in X} \left( 1 - u_t^a(x) \right)$. Then, under 
the assumptions of Theorem~\ref{t:gfl}, $ \EE \left[ W_a(X) \right] = 0$ implies
\(
\EE\left[ \lambda(a; X) \, W_a(X) / \lambda(a) \right] = 0
\)
for almost all $a\in \R^d$, and otherwise
\[
J_{\rm{inhom}}(t) = 
\EE\left[ \frac{\lambda(a; X)}{\lambda(a)} \,  W_a(X) \right]
 / \, \EE W_a(X),
\]
the $W_a$-weighted expectation of $\lambda(a;X)/\lambda(a)$.
\end{thm}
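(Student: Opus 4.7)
The strategy is to derive both the numerator and the denominator of the stated ratio from the generating-functional expressions of Theorem~\ref{t:gfl}, by applying the integral definition of the conditional intensity.

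First, I would invoke the defining identity of the conditional intensity with the test function $g(x,\varphi) = W_x(\varphi)$. Since the identity holds for every non-negative measurable $g$, and in particular with any indicator $1\{x\in A\}$ multiplying $W_x(\varphi)$, restricting both integrals to an arbitrary bounded Borel set $A$ and equating integrands yields
\[
\EE[\lambda(a;X)\, W_a(X)] \;=\; \lambda(a)\, \EE^{!a}[W_a(X)] \;=\; \lambda(a)\, G^{!a}(1-u_t^a)
\]
for almost every $a\in\R^d$. Dividing by $\lambda(a) \geq \bar\lambda > 0$ recovers precisely the numerator of Theorem~\ref{t:gfl}.

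Second, I would show that $\EE[W_a(X)] = G(1-u_t^a)$ agrees with $G(1-u_t^0)$ for almost all $a$. Under the convergence hypothesis inherited from Theorem~\ref{t:gfl}, the generating functional admits the power-series expansion
\[
G(1-u_t^a) = 1 + \sum_{n=1}^\infty \frac{(-\bar\lambda)^n}{n!}
\int_{B(a,t)}\!\cdots\!\int_{B(a,t)} \frac{\rho^{(n)}(x_1,\ldots,x_n)}{\lambda(x_1)\cdots\lambda(x_n)}\, dx_1\cdots dx_n.
\]
Expanding $\rho^{(n)}/(\lambda\cdots\lambda)$ as the standard partition sum of products of $n$-point correlation functions $\xi_n$, performing the change of variables $y_i = x_i - a$, and applying translation invariance of each $\xi_n$ removes the dependence on $a$; hence $G(1-u_t^a) = G(1-u_t^0)$ for almost every $a$, and the same manipulation transfers absolute convergence from $a=0$ to almost all $a$.

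Combining the two steps with Theorem~\ref{t:gfl} concludes the argument. When $\EE[W_a(X)] \neq 0$, the denominator $G(1-u_t^0)$ is non-zero and Theorem~\ref{t:gfl} together with the identities just established gives
\[
J_{\rm{inhom}}(t) \;=\; \frac{G^{!a}(1-u_t^a)}{G(1-u_t^0)} \;=\; \frac{\EE[\lambda(a;X)\, W_a(X)/\lambda(a)]}{\EE[W_a(X)]}.
\]
When $\EE[W_a(X)] = 0$, the proof of Theorem~\ref{t:gfl} actually established the almost-everywhere identity $G^{!a}(1-u_t^a) = J_{\rm{inhom}}(t)\cdot G(1-u_t^0)$, which then forces $G^{!a}(1-u_t^a) = 0$, and Step~1 yields $\EE[\lambda(a;X)\, W_a(X)/\lambda(a)] = 0$, as required. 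The main obstacle is the second step: translation invariance of $\xi_n$ holds only for almost every translation, so care is needed to aggregate countably many exceptional null sets while interchanging sum and integral, an interchange that is justified by the absolute convergence hypothesis of Theorem~\ref{t:gfl}.
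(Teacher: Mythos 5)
Your proposal is correct and follows essentially the same route as the paper: apply the defining integral identity of the conditional intensity to the test functions $1\{x\in A\}W_x(\varphi)/\lambda(x)$, equate integrands to get $\EE^{!a}[W_a(X)]=\EE[\lambda(a;X)W_a(X)/\lambda(a)]$ for almost all $a$, and then invoke Theorem~\ref{t:gfl}. You additionally spell out two points the paper leaves implicit (the translation-invariance argument identifying $\EE[W_a(X)]$ with $G(1-u_t^0)$, and the zero-denominator case via the factorisation $G^{!a}(1-u_t^a)=J_{\rm{inhom}}(t)\,G(1-u_t^0)$), both of which are handled correctly.
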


Consequently, $J_{\rm{inhom}}(t) \leq 1 \Leftrightarrow 
\rm{Cov}\left( \frac{ \lambda(a; X)}{\lambda(a)}, W_a(X) \right) \leq 0$ with
a similar statement for the opposite inequality sign. 

\mbox{}

\begin{proof}
Consider the functions
\[
g_A(x, X) = \frac{1\{ x \in A\}}{\lambda(x)} \,  \prod_{y\in X} 
\left( 1 - \frac{\bar \lambda 1\{ y\in B(x,t) \}}{\lambda(y)} \right)
\]
defined for all bounded Borel sets $A\subset \R^d$.  Arguing as in the proof
of Theorem~\ref{t:gfl} and using the definition of conditional intensities, one
obtains
\[
\int_A \EE^{!x} \left[ \prod_{y\in X} 
\left( 1 - \frac{\bar\lambda\, 1\{ y\in B(x,t) \}}{\lambda(y)} \right) \right] 
dx =
\int_A \EE\left[ \frac{\lambda(x; X)}{\lambda(x)} \,
\prod_{y\in X} \left( 1 - \frac{\bar\lambda\, 1\{ y\in B(x,t) \}}{
\lambda(y)} \right) \right] dx.
\]
Hence,
\[
\EE^{!x} \left[  \prod_{y\in X}
\left( 1 - \frac{\bar \lambda \, 1\{ y\in B(x,t) \}}{\lambda(y)} \right) \right]
= 
\EE\left[ \frac{\lambda(x; X)}{\lambda(x)} \,  \prod_{y\in X}
\left( 1 - \frac{\bar \lambda \, 1\{ y\in B(x,t) \}}{\lambda(y)} \right) \right]
\]
for almost all $x\in\R^d$. An appeal to Theorem~\ref{t:gfl} completes the proof.
\end{proof}

\section{Theoretical examples}
\label{S:theo}
\subsection{Poisson process}

Let $X$ be a Poisson point process with intensity function
$\lambda: \R^d \to \R^+$ that is bounded away from zero. Since
$\rho^{(n)}(x_1, \dots, x_n) = \prod_i \lambda(x_i)$, the
$n$-point correlation functions vanish for $n>1$, so $J_{\rm{inhom}}
(t) \equiv 1$ for all $t\geq 0$.  

The generating functional of $X$ is 
\[
G(1-u) = \exp\left[ - \int u(x) \, \lambda(x) \, dx \right].
\]
In particular, for the function $u=u_t^0$ defined in Theorem~\ref{t:gfl},
$G(1 - u_t^0) = \exp\left[ - \bar \lambda | B(0,t) | \right]$.
Also, since according to Slivnyak's theorem for a Poisson point process
$\PP^{!0} = \PP$, $G^{!0}(1 - u_t^0) = G(1-u_t^0)$. Finally, the conditional
intensity $\lambda(\cdot, X)$ of $X$ coincides almost everywhere with 
the intensity function $\lambda(\cdot)$.

\subsection{Location dependent thinning}

Let $X$ be a simple, stationary point process on $\R^d$ for which product
densities $\rho^{(n)}$ of all orders exist. Let $p : \R^d \to (0,1)$ be a 
measurable function that is bounded away from zero and consider the thinning 
of $X$ with retention probability $p(x)$ as in Example 8.2 of \cite{DaleVere88}. 
Since the process is simple,  the product densities $\rho^{(n)}_{\rm{th}}$ of the 
thinned point process can be expressed in terms of those of $X$ by
\(
 \rho^{(n)}_{\rm{th}}(x_1, \dots, x_n) = 
 \rho^{(n)}(x_1, \dots, x_n) \prod_{i=1}^n p(x_i).
\)
In particular, the intensity function of the thinned point
process is
\(
 \lambda_{\rm{th}}(x) = \lambda \, p(x),
\)
where $\lambda>0$ is the intensity of $X$.  Consequently, 
\[
\frac{\rho^{(n)}_{\rm{th}}(x_1, \dots, x_n)}{\lambda_{\rm{th}}(x_1) 
\cdots \lambda_{\rm{th}}(x_n) } = \frac{\rho^{(n)}(x_1, \dots, x_n)}{\lambda^n}.
\]
Therefore, the $n$-point correlation functions of the thinned point process
coincide with those of the underlying stationary point process $X$,
$\xi_n^{\rm{th}}(x_1, \dots, x_n) = \xi_n(x_1, \dots, x_n)$, and inherit 
the property of translation invariance. Hence
\(
J_n^{\rm{th}}(t) 
\)
is equal to the $J_n$-function of the underlying point process $X$.
As the intensity function of the thinned point process is bounded from
below by $\lambda \bar p$ where $\bar p$ is the infimum
of the retention probabilities,
\[
J_{\rm{inhom}}^{\rm{th}}(t) = 1 + 
  \sum_{n=1}^\infty \frac{(- \lambda \, \bar p)^n}{n!} J_n(t)
\]
for all $t\geq 0$ for which the series converges. Note that the power 
series coefficients are identical to those in the power series expansion of 
the $J$-function of $X$.

The generating functional of the thinned point process is 
\(
G_{\rm{th}}(v) = G( vp + 1 - p),
\)
where $G$ is the generating functional of $X$. Hence 
\[
G_{\rm{th}}\left(1 - \frac{\bar p}{p(\cdot)} \, 1\{ \cdot \in B(0,t) \} \right)  = 
G( 1 - \bar p \, 1\{ \cdot \in B(0,t) \} )  = 
\EE \left[ (1- \bar p)^{n(X\cap B(0,t))} \right],
\]
the generating function of the number of points of $X$ that fall in $B(0,t)$ 
evaluated at $1-\bar p$.

As the reduced Palm distribution of the thinned point process coincides with
a random location dependent thinning of the reduced Palm distribution of $X$
with retention probabilities given by the function $p$,
\[
G_{\rm{th}}^{!0}( 1 - u_t^0) = \EE^{!0} \left[ (1 - \bar p)^{n(X\cap B(0,t))} \right],
\]
so that under the assumptions of Theorem~\ref{t:gfl}
\[
J_{\rm{inhom}}^{\rm{th}}(t) = \frac{\EE^{!0} \left[ (1 - \bar p)^{n(X\cap B(0,t))} \right]}{
\EE \left[ (1 - \bar p)^{n(X\cap B(0,t))} \right]}.
\]

To conclude this example, note that the assumption of stationarity of the underlying 
point process $X$ may be weakened to intensity-reweighted moment stationarity.


\subsection{Scaling}
\label{S:scaling}

Let $X$ be a simple point process on $\R^d$ for which product densities 
$\rho^{(n)}$ of all orders exist. Let $c>0$ be a scalar constant and map the 
point pattern $X$ to $cX$. Then all order product densities $\rho^{(n)}_{cX}$
of $cX$ exist and are given by
\(
\rho^{(n)}_{cX}(x_1, \dots, x_n) = c^{-dn} \rho^{(n)}( x_1 / c, \dots, x_n/c ).
\)
In particular for $n=1$, $\lambda_{cX}(x) = c^{-d} \lambda(x/c)$. Therefore the
$n$-point correlation functions 
\[
\xi_n^{cX}(x_1, \dots, x_n) = \xi_n(x_1/c, \dots, x_n/c)
\]
of $cX$ are invariant under translations if and only if the $n$-point 
correlation functions $\xi_n$  of $X$ are, in which case the $J_n$-functions 
$J_n^{cX}$  of $cX$ are scaled versions 
\(
J_n^{cX}(t) = c^{dn} J_n(t/c)
\)
of the corresponding functions of $X$.
Furthermore, $\inf_{x\in\R^d} \lambda_{cX}(x) = \bar\lambda \, c^{-d}$,
so the inhomogeneous $J$-function of $cX$ is 
\[
J_{\rm{inhom}}^{cX}(t) = 1 + 
  \sum_{n=1}^\infty \frac{(- \bar \lambda \, c^{-d})^n}{n!} c^{dn} J_n(t/c) 
= 1 + \sum_{n=1}^\infty \frac{(- \bar \lambda )^n}{n!} J_n(t/c) = 
J_{\rm{inhom}}(t/c),
\]
the inhomogeneous $J$-function of $X$ evaluated at $t/c$ provided the series 
converges. Note that in contrast to the thinning case, the power series coefficients 
are not identical to those of the underlying point process $X$.

The generating functional of the scaled process is given by $G_{cX}(v) =
G(v(c \cdot))$, where $G$ is the generating functional of $X$, whence
\[
G_{cX}\left( 1 - \frac{c^{-d} \bar\lambda}{c^{-d} \lambda( \cdot /c)} 
  1\{ \cdot \in B(0,t) \} \right) = 
G\left( 1 - \frac{\bar\lambda}{ \lambda(\cdot)} 
  1\{ \cdot \in B(0,t/c) \} \right).
\]
Similarly, noting that
$d\PP^{!x}_{cX}(\varphi) = d\PP^{!x/c}(\varphi/c)$,
\[
G_{cX}^{!a}\left( 1 - \frac{c^{-d} \bar\lambda}{c^{-d} \lambda( \cdot /c)} 
  1\{ \cdot \in B(a,t) \} \right) = 
G^{!a/c}\left( 1 - \frac{\bar\lambda}{ \lambda(\cdot)} 
  1\{ \cdot \in B(a/c,t/c) \} \right).
\]
To conclude this example, a conditional intensity of $cX$ is obtained by
scaling that of $X$, i.e.\ $\lambda_{cX}(x, \varphi) = 
c^{-d} \lambda(x/c, \varphi/c)$ \cite{Hahnetal03}, 
from which we retrieve the formula 
$J_{\rm{inhom}}^{cX}(t) = J_{\rm{inhom}}(t/c)$ under the assumptions
of Theorem~\ref{t:ci}.

\subsection{Log Gaussian Cox process}

Write $Q$ for the distribution of a random measure defined in terms
of its Radon--Nikodym derivative $\Lambda$ with respect to Lebesgue
measure.  We assume that all moment measures of the random measure
exist and are locally finite. Let $X$ be the Cox process 
directed by the random intensity process $\Lambda$, that is, 
given a realisation $\Lambda=\lambda$, $X$ is a Poisson point process 
with intensity function $\lambda$. It follows from \cite[p.\ 262]{DaleVere88} 
that the factorial moment measures of $X$ exist and are equal to the moment 
measures of the driving random measure. Hence $X$ has product densities
\(
\rho^{(n)}(x_1, \dots, x_n) = \EE \left[ \prod_{i=1}^n \Lambda(x_i) \right].
\)
Moreover, the reduced Palm distribution of $X$ at $x$ is the distribution
of a Cox process with driving random measure distributed as $Q^x$, the
Palm distribution of the driving measure of $X$ at $x$ \cite[p.\ 141]{Stoyetal87}.

The class of log-Gaussian Cox processes \cite{Molletal98}
is especially convenient. For models in this class,
\[
\Lambda(x) = \exp\left[ Z(x) \right]
\]
where $Z$ is a Gaussian field. Such a field is defined fully by its mean and 
covariance function. Write $\mu(x)$ for the mean function, $\sigma^2(x)$ 
for the variance of $Z(x)$ and $r(x,y)$ for the correlation function. In other 
words, the covariance function of $Z$ is given by 
$\sigma(x) \, \sigma(y) \, r(x,y)$. Conditions have to be imposed on
these functions in order to make the resulting Cox process well-defined. 
In particular, the intensity function must be integrable almost surely, and 
\(
\Psi_\Lambda(B) = \int_B \Lambda(x) \, dx
\)
a finite random variable for all bounded Borel sets $B\subset \R^d$.
Moreover, the distribution of the random measure $\Psi_\Lambda$ must be
uniquely determined by that of $Z$.  Sufficient conditions are given in 
\cite[Thm.~3.4.1]{Adle81} for zero mean Gaussian processes. Therefore,
we additionally assume that the mean function $\mu$ is continuous and
bounded. Now, since
\(
\EE\left[ \prod_i \Lambda(x_i) \right] = 
\EE\left[ e^{\sum_i Z(x_i)} \right],
\)
the moment generating function of the normally distributed random 
variable $\sum_i Z(x_i)$ evaluated at $1$, 
\[
\rho^{(n)}(x_1, \dots, x_n) =
\exp\left[ \sum_{i=1}^n \left( \mu(x_i) + \frac{\sigma^2(x_i) }{2} \right) 
   + \sum_{i<j} \sigma(x_i) \sigma(x_j) r(x_i, x_j)
\right].
\]
Specialising to $n=1$, it follows that the intensity function is 
$\log \lambda(x) = \mu(x) + \sigma^2(x)/2$
whence
\[
\frac{\rho^{(n)}(x_1, \dots, x_n)}{\lambda(x_1) \cdots \lambda(x_n)} =
\exp\left[ \sum_{i<j} \sigma(x_i) \, \sigma(x_j) \, r(x_i, x_j)
\right] .
\]
Thus, if $\sigma(\cdot) \equiv \sigma> 0$ and $r(x,y) = r(x-y)$, $X$ is 
intensity-reweighted moment stationary and the intensity function
is bounded away from zero with infimum 
$\exp\left[ \sigma^2/2 + \inf_{x\in\R^d}\mu(x) \right]$.


In order to derive an explicit formula for $J_{\rm{inhom}}$, we turn to the 
generating functional. Recall that a Cox process has a generating functional
\cite[Prop.~8.5.1]{DaleVere88} defined by
$G(v) = \EE_Q \exp\left[ - \int (1 - v(x) ) \, \Lambda(x) \, dx \right]$.
Therefore, for the log-Gaussian Cox process
\[
G\left(1 - u_t^0\right) 
 =  \EE_Z \exp\left[ - \bar \mu \int_{B(0,t)} e^{Z(x) - \mu(x)} dx \right].
\]
where $\bar \mu$ denotes $\inf_{x\in\R^d} e^{\mu(x)}$. 

The Palm distributions $Q^x$ of a log Gaussian random measure are 
$\Lambda(x)=e^{Z(x)}$-weighted. To see this, note that the Campbell measure
evaluated at a bounded Borel set $B\subset\R^d$ and $F$ in the smallest 
$\sigma$-algebra for which $\Psi_\Lambda(B) = \int_B \Lambda(x) \, dx$ 
is finite for all such $B$, can be calculated as
\[
C(B\times F) =  \EE_Q\left[ 1_F( \Psi_\Lambda) \, \Psi_\Lambda(B) \right] 
=
 \int_B \lambda(x) \left[ 
   \int 1_F( \Psi_\Lambda) \; \frac{\Lambda(x)}{\lambda(x)} \; 
  dQ(\Psi_\Lambda) \right] dx
\]
by Fubini and the existence of a $\sigma$-finite intensity measure $\lambda(\cdot)$
that is bounded away from zero. 
Therefore,
\[
G^{!a}\left(1- u_t^a \right)  =   \EE _Z \left[ \frac{e^{Z(a) - \mu(a)}}{e^{\sigma^2/2}}
  \exp\left[ - \bar \mu \int_{B(a,t)} e^{Z(y) - \mu(y)} dy \right] 
\right].
\]
Since $Y(x) := Z(x) - \mu(x)$, $x\in\R^d$, is a stationary Gaussian process, 
the above generating functional does not depend on the choice of $a$.  Therefore, 
under the assumptions of Theorem~\ref{t:gfl},
\[
J_{\rm{inhom}}(t) = \frac{ \EE_Y e^{Y(0)} \left[ \exp\left[ - \bar\mu 
   \int_{B(0,t)} e^{Y(x)} dx \right] \right] }{
\EE_Y\left[ e^{Y(0} \right] \EE_Y \exp\left[ - \bar\mu
  \int_{B(0,t)} e^{Y(x)} dx \right] }.
\]
The mixed Poisson process considered in \cite{LiesBadd96}  is a special case.

Note that $J_{\rm{inhom}}(t) < 1$ if and only if the random variables $e^{Y(0)}$ and 
$e^{-\bar \mu \int_{B(0,t)} e^Y}$ are negatively correlated. The geostatistical models 
used in practice, for example the one we shall use in Section~\ref{S:examples},
have a positive, continuously decreasing, correlation function. Therefore, by
Pitt's theorem \cite{Pitt82},  the Gaussian fields defined by such correlation
functions are associated. Under the further conditions of \cite[Thm. 3.4.1.]{Adle81}, 
the sample functions $Y(\cdot)$ are almost surely continuous and hence the integral 
of $e^Y$ over $B(0,t)$ is uniquely defined and the limit of Riemann sums over ever 
finer partitions of $B(0,t)$.  Since $Y$ is associated, 
$\mbox{Cov}\left( e^{Y(0)}, e^{-c_i \sum_i e^{Y(x_i)}} \right) \leq 0$ 
for all finite sums with positive scalar multipliers $c_i > 0$. Upon taking the limit,
it follows that $J_{\rm{inhom}}(t) \leq 1$.

\section{Estimation}
\label{S:estim}

The goal of this section is to develop an estimator for the inhomogeneous
$J$-function of Definition~\ref{d:J}. For this purpose, we shall use the 
representation in terms of generating functionals of Theorem~\ref{t:gfl}
and apply the minus sampling principle outlined in \cite[p.~127]{Stoyetal87}.

Specifically, let Let $W\subset\R^d$ be a compact set with non-empty 
interior and suppose the point process $X$ is observed in $W$. For clarity
of exposition, we assume that the intensity function $\lambda$ is known. If 
it is not, it can be estimated (for instance using kernel estimation 
\cite{BermDigg89}) and plugged into the estimators outlined below.

In order to estimate $G(1 - u_t^0)$, let $L\subseteq W$ be a finite
point grid. Set
\begin{equation}
\label{e:Fhat}
\widehat{G(1-u_t^0)} := \frac{
\sum_{l_k \in L \cap W_{\ominus t} } \prod_{x\in X \cap B(l_k, t)} 
\left[ 1 - \frac{\bar \lambda}{\lambda(x)} \right]
}{
\# L \cap W_{\ominus t}
},
\end{equation}
where $W_{\ominus t} $ is the eroded set 
$\{ x\in W: d(x, \partial W) \geq t \} = 
\{ x\in W: x + B(0,t) \subseteq W \}$.
Note that $\widehat{G(1-u_t^0)}$ is an estimator as for all grid points
$l_k \in W_{\ominus t}$ the ball $B(l_k,t)$ is fully contained in $W$ so
that no points of $X\setminus W$ are needed for the computation of the
product in the numerator of (\ref{e:Fhat}).

Similarly, set
\begin{equation}
\label{e:Ghat}
\widehat{G^{!a}(1-u_t^a)}  =  
\frac{
\sum_{x_k \in X\cap W_{\ominus t}} \prod_{x \in X\setminus \{ x_k \} \cap B(x_k, t)} 
\left[ 1 - 
 \frac{\bar\lambda}{\lambda(x)}
\right] }
{ \# X\cap W_{\ominus t}} .
\end{equation}
Compared to (\ref{e:Fhat}), the grid points $l_k$ are replaced by the points 
$x_k$ of $X \cap W_{\ominus t}$. Again, (\ref{e:Ghat}) is a function of
$X\cap W$ only.

\begin{prop}
Under the assumptions of Theorem~\ref{t:gfl}, the estimator
(\ref{e:Fhat}) is unbiased, (\ref{e:Ghat}) is ratio-unbiased.
\end{prop}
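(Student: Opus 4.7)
The plan is to handle the two estimators separately, using linearity of expectation for (\ref{e:Fhat}) and the Campbell--Mecke formula for (\ref{e:Ghat}), in each case reducing the calculation to one already carried out in the proof of Theorem~\ref{t:gfl}.

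For (\ref{e:Fhat}), I first observe that for each grid point $l_k$ the product in the numerator equals $\prod_{x\in X}(1 - u_t^{l_k}(x))$, so its expectation is the generating functional $G(1-u_t^{l_k})$. Expanding this exactly as in equation (\ref{e:G}) of the proof of Theorem~\ref{t:gfl} but centred at $l_k$ instead of the origin, and then substituting $x_i \mapsto x_i + l_k$ in each integral, the translation invariance of the $n$-point correlation functions (hence of $\rho^{(n)}(x_1,\dots,x_n)/[\lambda(x_1)\cdots\lambda(x_n)]$) collapses every term to its value at $a=0$. The denominator $\#L\cap W_{\ominus t}$ being deterministic, linearity of expectation yields $\EE[\widehat{G(1-u_t^0)}] = G(1-u_t^0)$.

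For (\ref{e:Ghat}), I would apply the Campbell--Mecke formula to the expectation of the numerator to obtain
\[
\int_{W_{\ominus t}} \lambda(x)\,G^{!x}(1-u_t^x)\,dx.
\]
Theorem~\ref{t:gfl}, specifically the absolutely convergent expansion in (\ref{e:Gpalm}) combined with the translation invariance of the $n$-point correlation functions, shows that $G^{!x}(1-u_t^x)$ is constant in $x$ almost everywhere, with common value $G^{!a}(1-u_t^a)$, so the integral factorises as $G^{!a}(1-u_t^a)\,\Lambda(W_{\ominus t})$. The expectation of the denominator $\#X\cap W_{\ominus t}$ equals $\Lambda(W_{\ominus t})$ by definition of the intensity measure, and the ratio of expectations is therefore $G^{!a}(1-u_t^a)$, establishing ratio-unbiasedness.

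The one delicate point is making the ``almost all $a$'' translation invariance in Definition~\ref{d:J} compatible with evaluation at the deterministic grid points $\{l_k\}$ required for (\ref{e:Fhat}); this is resolved by choosing a truly translation invariant version of each $\xi_n$ (equivalently, a version depending only on the differences $x_i - x_n$), which is harmless since product densities are defined only up to $\ell^n$-null sets. No such issue arises in the argument for (\ref{e:Ghat}), where the constancy of $G^{!x}(1-u_t^x)$ need only hold $\lambda(x)\,dx$-almost everywhere on $W_{\ominus t}$.
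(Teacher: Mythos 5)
Your argument is correct and follows essentially the same route as the paper's: for (\ref{e:Fhat}) it expands the product into the factorial moment series and invokes translation invariance to reduce each ball $B(l_k,t)$ to $B(0,t)$, and for (\ref{e:Ghat}) it applies the reduced Campbell--Mecke formula together with the almost-everywhere constancy of $G^{!x}(1-u_t^x)$ already established in the proof of Theorem~\ref{t:gfl}. Your closing remark about fixing a genuinely translation-invariant version of the $\xi_n$ so that the identity holds at the \emph{deterministic} grid points $l_k$ addresses a null-set subtlety the paper passes over silently, and is a worthwhile addition.
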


\begin{proof}
We claim that 
\begin{equation}
\label{c:F}
\EE\left[
\prod_{x\in X \cap B(l_k, t)} 
\left( 1 - \frac{\bar \lambda}{\lambda(x)} \right) \right] = 
G(1 - u_t^0)
\end{equation}
for all $l_k \in L\cap W_{\ominus t}$. To see this, note that 
\begin{equation}
\label{e:prod}
\prod_{x\in X \cap B(l_k, t)} 
\left( 1 - \frac{\bar \lambda}{\lambda(x)} \right) 
=
1 + \sum_{n=1}^\infty \frac{(-\bar\lambda)^n}{n!} {\sum}^{\neq}_{x_1,
\dots, x_n \in X\cap W} \prod_{i=1}^n \frac{ 1\{ x_i - l_k \in B(0,t) \}}{
\lambda(x_i)}.
\end{equation}
Hence, under the assumptions made, the expectation of (\ref{c:F}) can be
expressed as
\[
1 + \sum_{n=1}^\infty \frac{(-\bar\lambda)^n}{n!} \int_{B(l_k,t)} \cdots
\int_{B(l_k,t)} \frac{ \rho^{(n)}(x_1, \dots, x_n )}{
\prod_{i=1}^n \lambda(x_i )} \, dx_1 \cdots dx_n,
\]
which, because of the translation invariance of the integrands, 
reduces to
\[
1 + \sum_{n=1}^\infty \frac{(-\bar\lambda)^n}{n!} \int_{B(0,t)} \cdots
\int_{B(0,t)} \frac{ \rho^{(n)}(x_1, \dots, x_n)}{
\prod_{i=1}^n \lambda(x_i)} \, dx_1 \cdots dx_n = G(1 - u_t^0).
\]
This proves the claim, from which the  unbiasedness of (\ref{e:Fhat}) follows.

Next, turn to the numerator of (\ref{e:Ghat}). By the definition of Palm
distributions and the reduced Campbell--Mecke theorem \cite[p.\107]{Stoyetal87},
its expectation can be expressed as
\[
\int \int_{W\ominus t} \lambda(x) \, \prod_{y\in \varphi} \left(1 - \frac{\bar\lambda \,
1\{ y \in B(x, t)\}}{\lambda(y)} \right) d\PP^{!x}(\varphi) \, dx.
\]
By (\ref{e:prod}), Fubini, the first equation in the proof of Theorem~\ref{t:gfl},
and (\ref{e:Gpalm}), the Palm expectation in the integrand equals 
$G^{!0}( 1 - u_t^0 )$ for almost all $x$, hence the numerator of (\ref{e:Ghat})
equals 
\(
G^{!0}( 1 - u_t^0 ) \, \int_{W\ominus t} \lambda(x) \, dx.
\)
As the expectation of the denominator is equal to $\int_{W\ominus t} \lambda(x) \, dx$,
(\ref{e:Ghat}) is ratio-unbiased as claimed.
\end{proof}
\begin{figure}[bhtp]
\begin{center}
\centerline{
\epsfxsize=0.25\hsize
\epsfysize=0.25\hsize
\epsffile{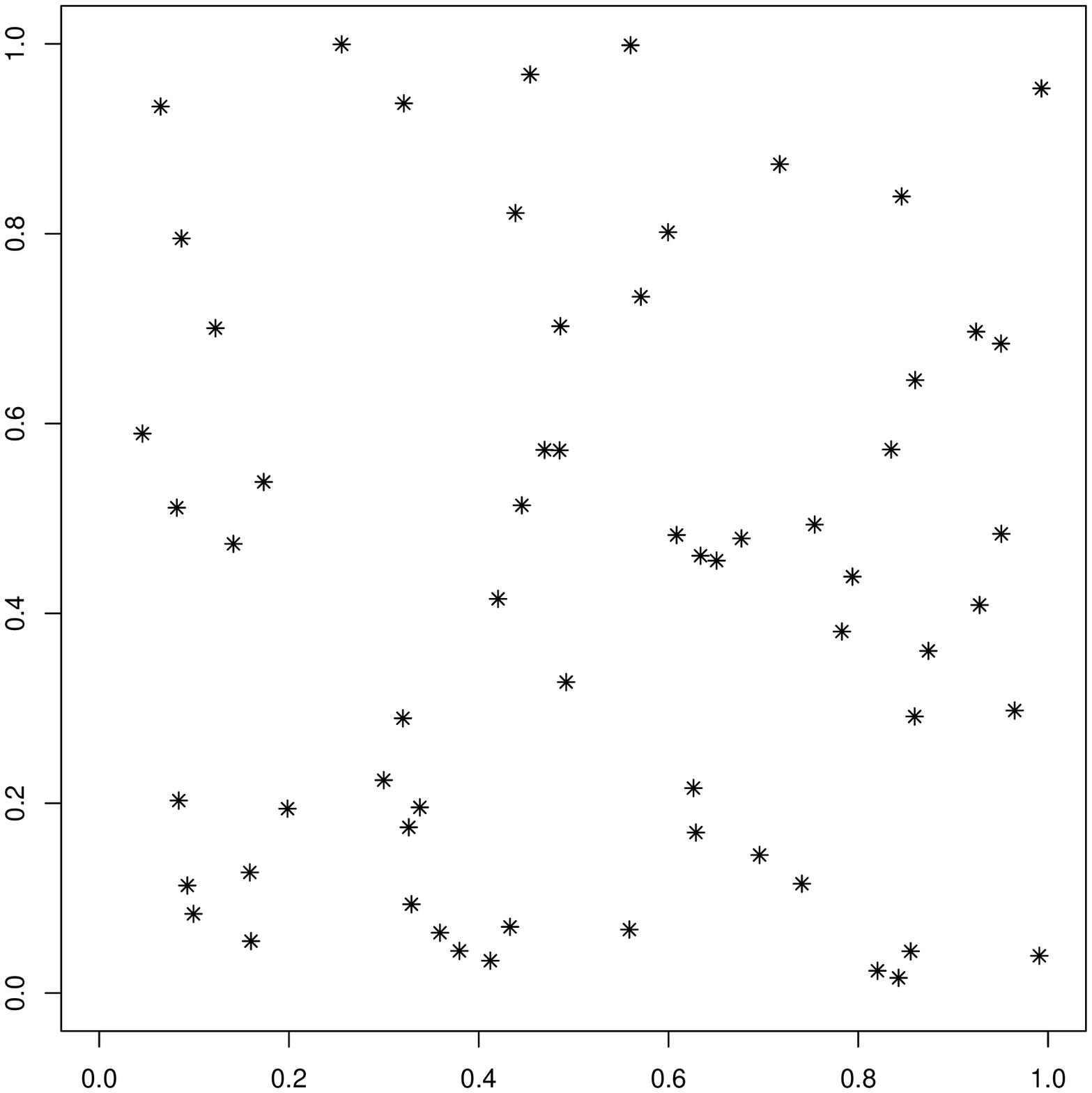}
\epsfxsize=0.25\hsize
\epsfysize=0.25\hsize
\epsffile{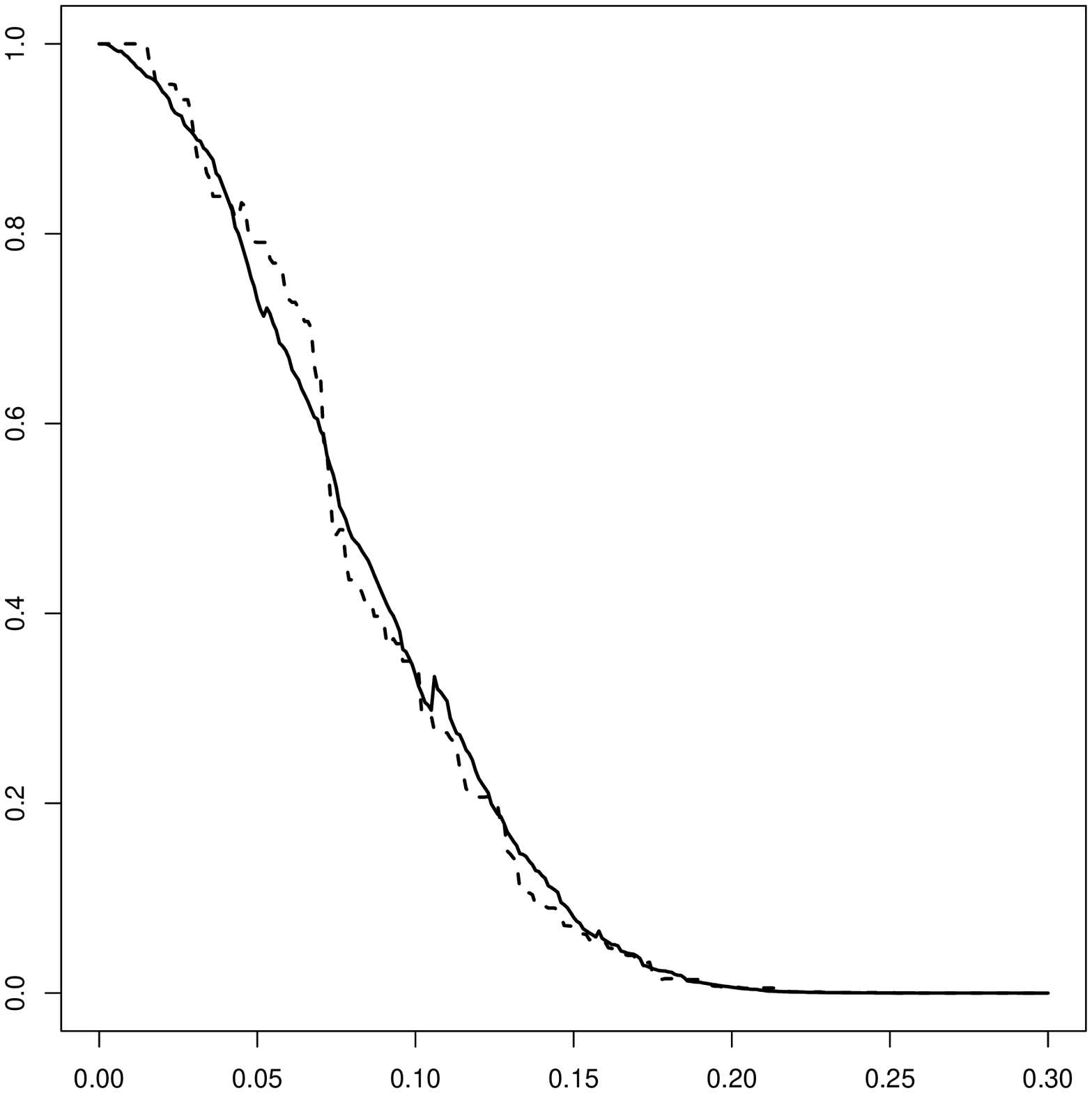}
\epsfxsize=0.25\hsize
\epsfysize=0.25\hsize
\epsffile{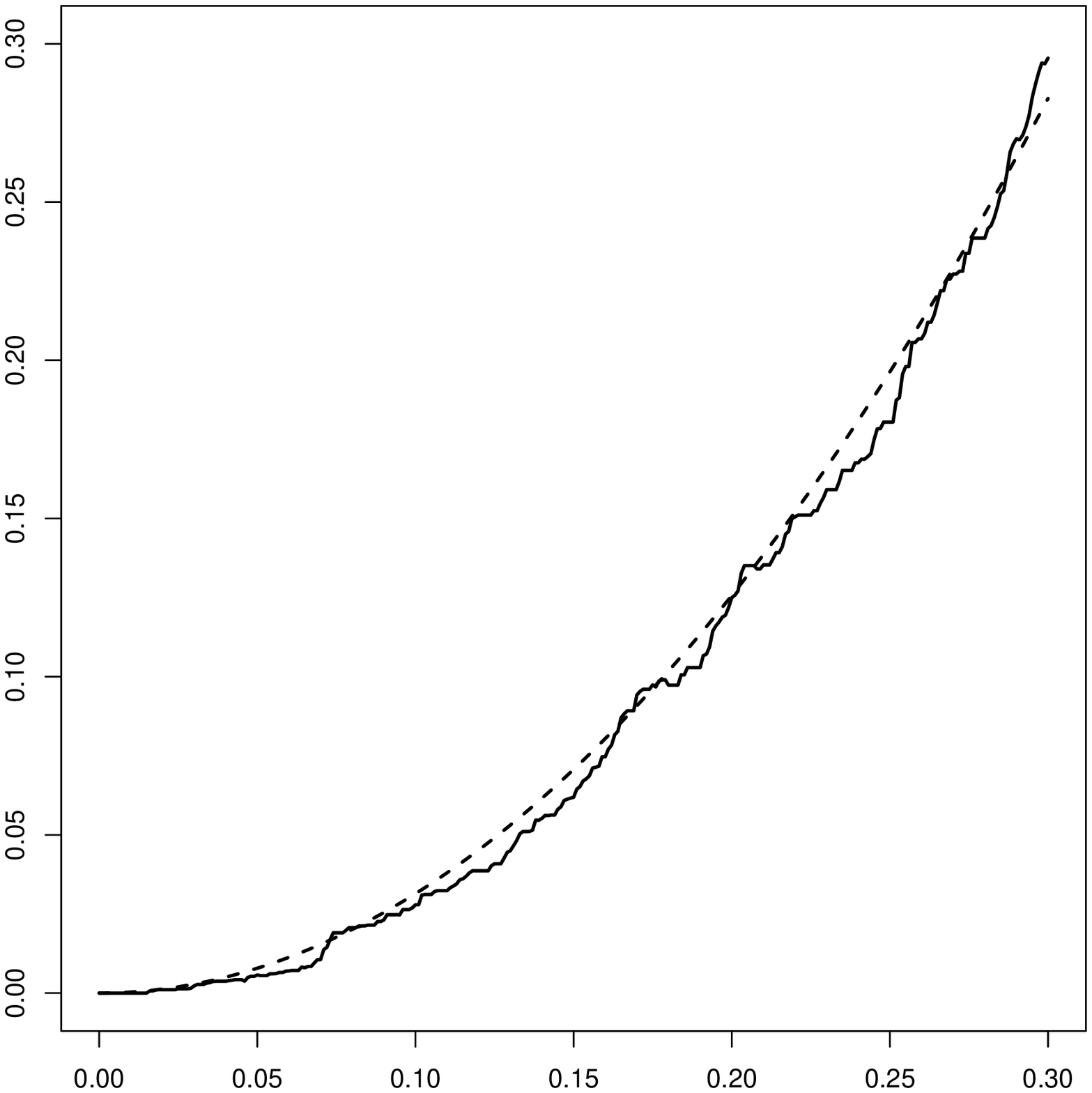}
}
\centerline{
\epsfxsize=0.25\hsize
\epsfysize=0.25\hsize
\epsffile{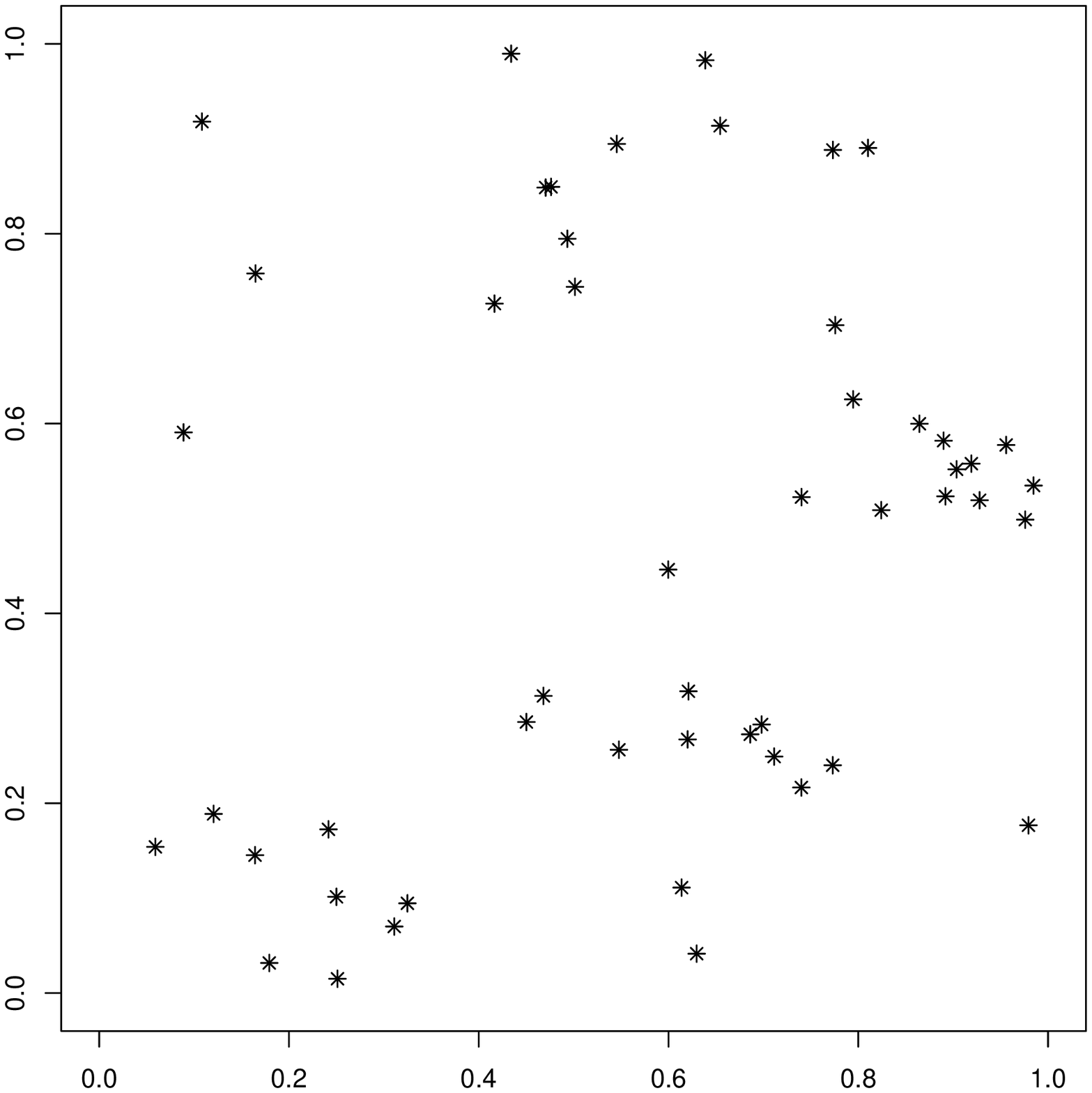}
\epsfxsize=0.25\hsize
\epsfysize=0.25\hsize
\epsffile{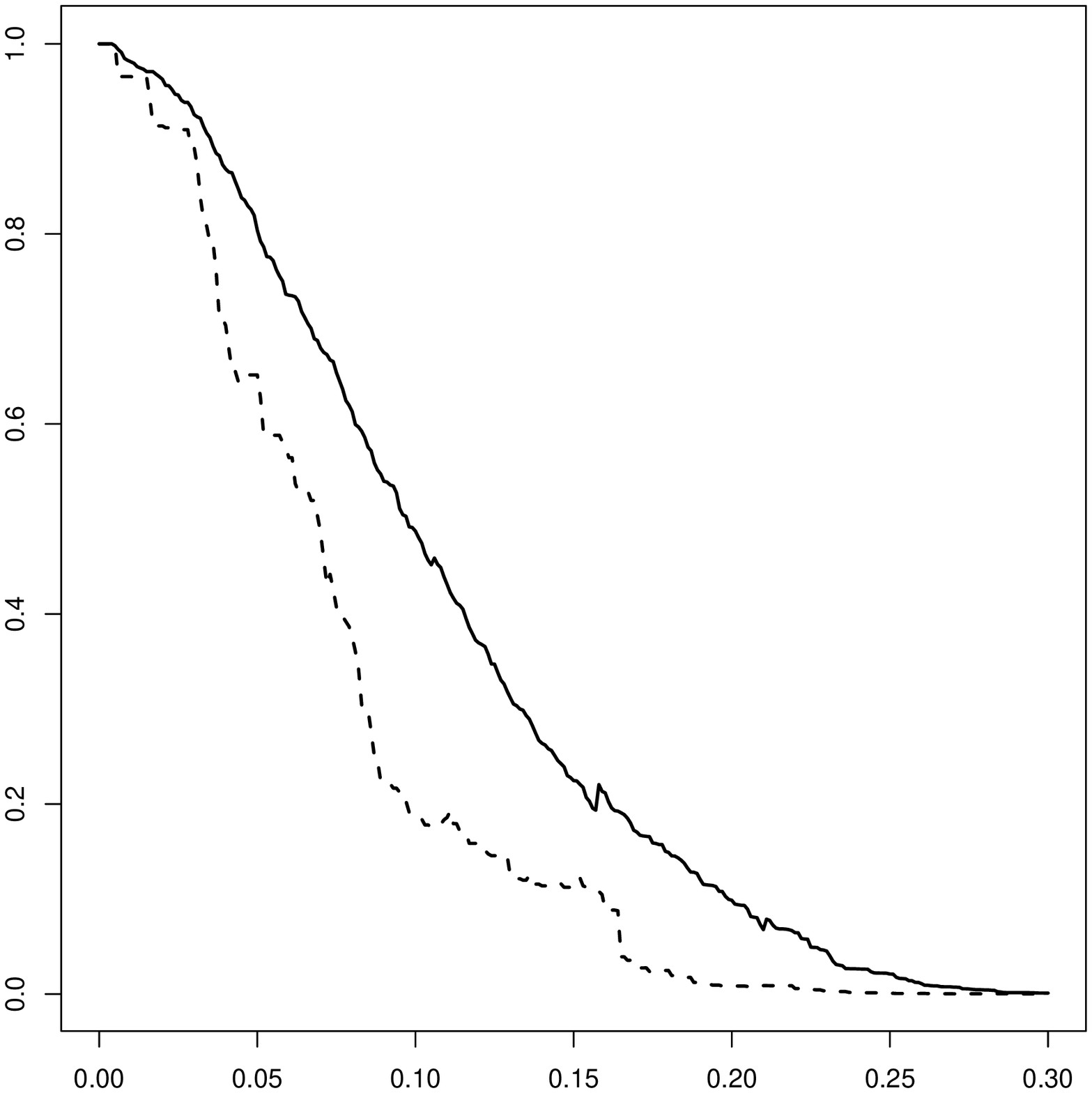}
\epsfxsize=0.25\hsize
\epsfysize=0.25\hsize
\epsffile{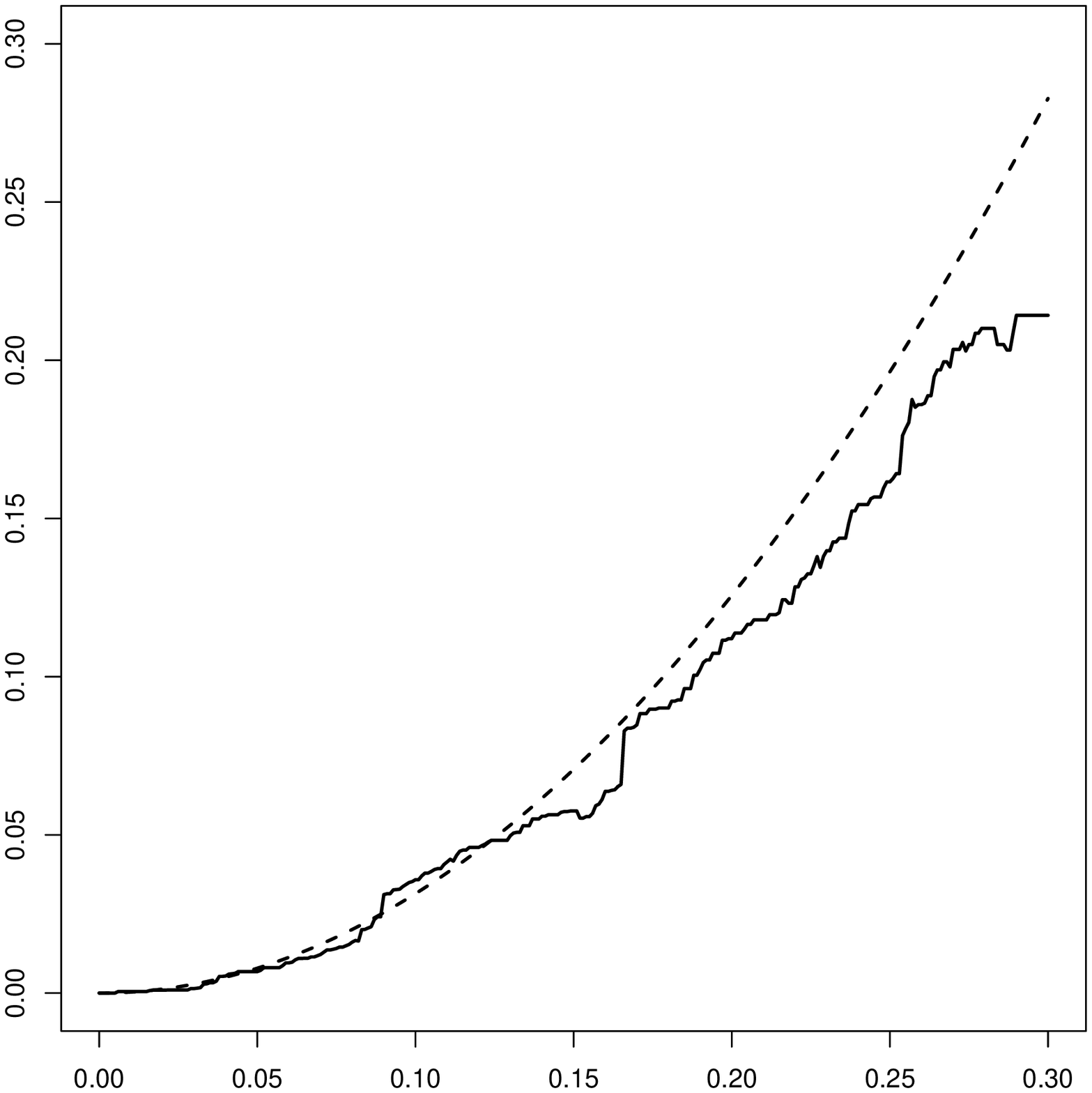}
}
\centerline{
\epsfxsize=0.25\hsize
\epsfysize=0.25\hsize
\epsffile{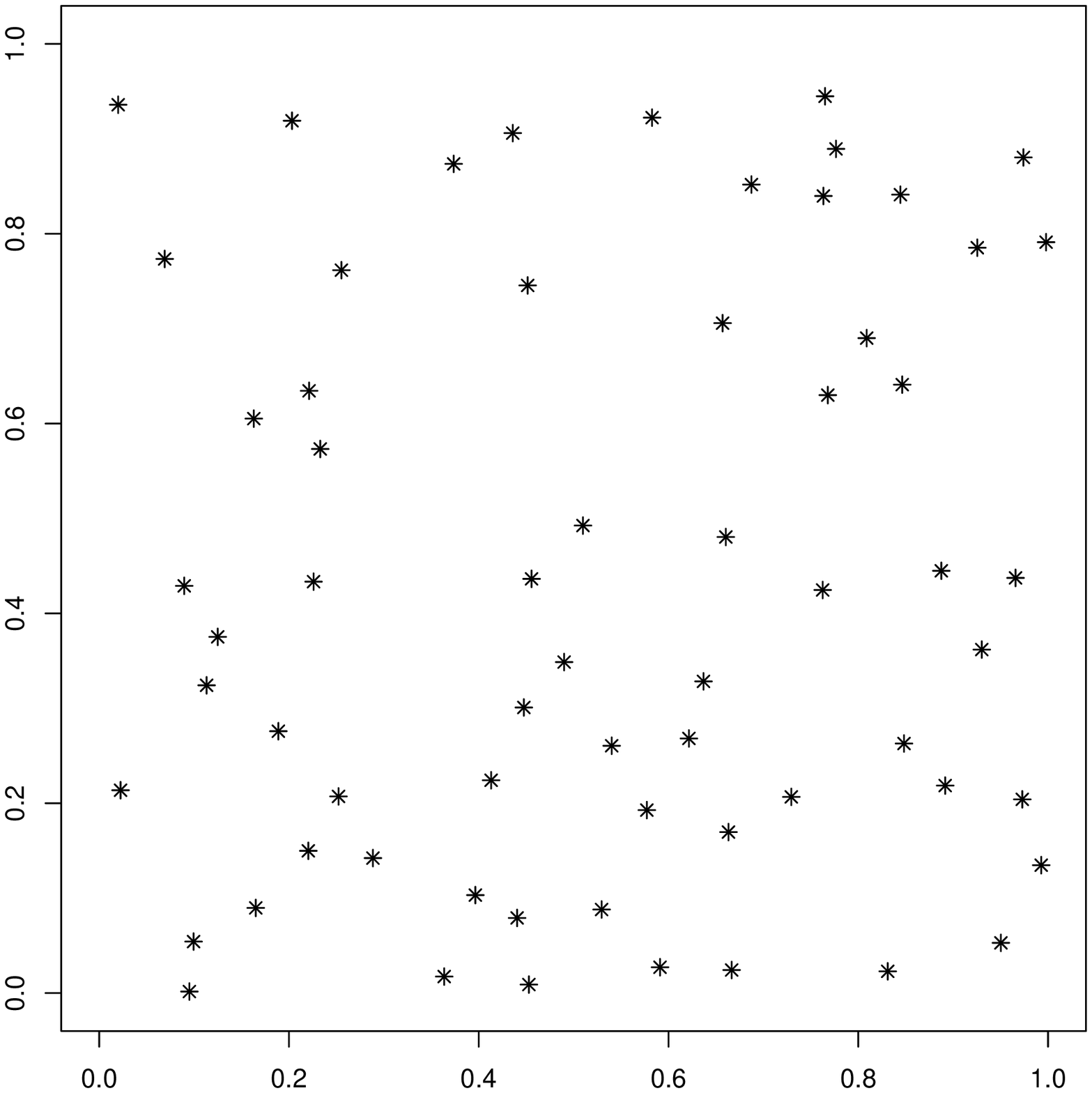}
\epsfxsize=0.25\hsize
\epsfysize=0.25\hsize
\epsffile{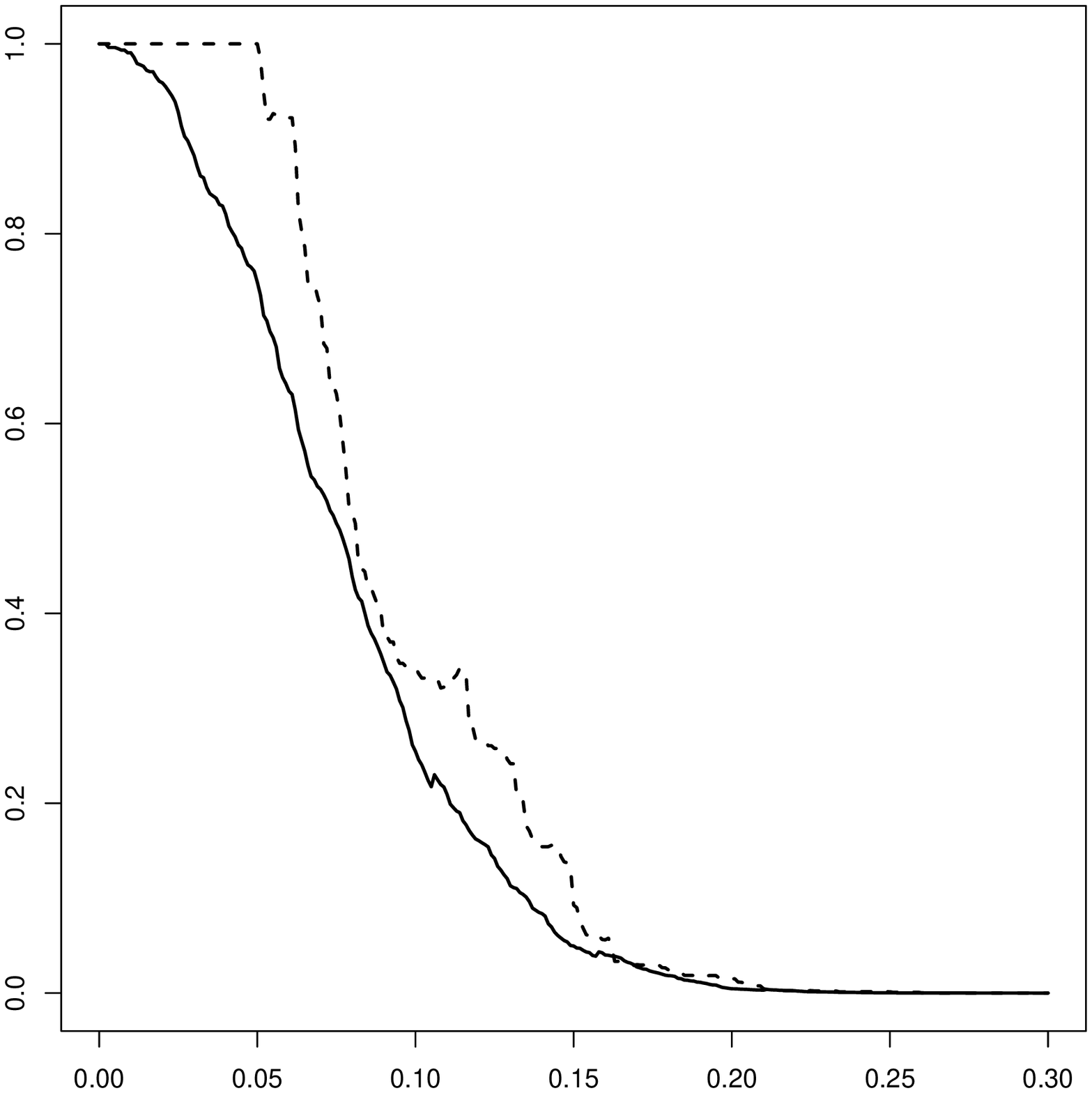}
\epsfxsize=0.25\hsize
\epsfysize=0.25\hsize
\epsffile{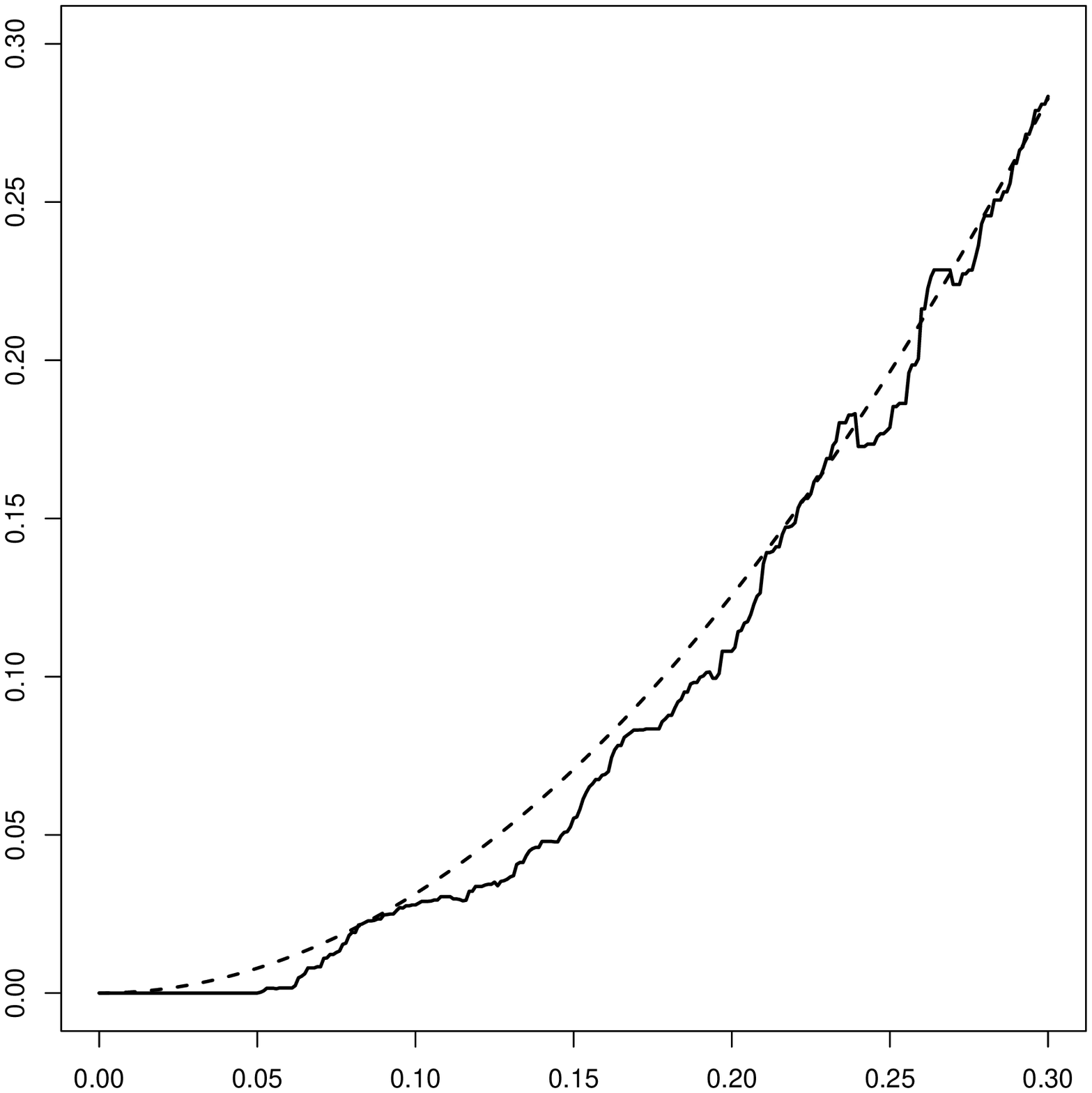}
}
\end{center}
\caption{Each row contains a realisations of a point process in the leftmost
frame, the graphs of (\ref{e:Fhat}) (solid line) and (\ref{e:Ghat}) (dashed line)
in the middle frame, and the graph of $\widehat{K_{\rm{inhom}}(t)}$ (solid line)
compared to $\pi \, t^2$ (dashed line) in the rightmost frame. The models are a 
Poisson point process (top row), a log Gaussian Cox process (middle row), and 
a thinned hard core process (bottom row).}
\label{F:Poisson}
\end{figure}

\section{Examples}
\label{S:examples}

In order to see how $J_{\rm{inhom}}(t)$ works in practice, we simulated
realisations of three of the models presented in Section~\ref{S:theo}. Typical 
patterns are displayed in the leftmost column of Figure~\ref{F:Poisson}.
In all three images a smooth intensity gradient can be observed: more points 
are located near the bottom of the square than near the top. However, the 
interaction structure seems different. For example, the middle picture contains 
groups of points that are close together, with large gaps in between the clusters.  
In the lower picture on the other hand, points seem to avoid being very close 
together and are more evenly spaced out. In the top picture, both very small 
and very large interpoint distances occur. 
In order to quantify the above qualitative remarks, we applied the ideas
presented in this paper and compared the results to those obtained by
a second order analysis. To simulate the patterns and calculate the 
estimators, the {\tt{R}} packages {\tt{spatstat}}\footnote{Adrian 
Baddeley, email: adrian@maths.uwa.edu.au; and Rolf Turner, email: 
r.turner@auckland.ac.nz} and {\tt{RandomFields}}\footnote{Martin 
Schlather, email: martin.schlather@ math.uni-goettingen.de} were used.

\paragraph{Poisson point process}
The first example is a heterogeneous Poisson point process with intensity
function
\(
\lambda(x,y) = 100 \, e^{-y}.
\)
Note that the mean number of points is $100 ( 1 - e^{-1}) \approx 
60$ per unit area. A realisation is shown in the top left frame in
Figure~\ref{F:Poisson}.  The top middle frame shows (\ref{e:Fhat})
(solid line) and (\ref{e:Ghat}) (dashed line). It can be seen that the graphs
lie close together, in accordance with the fact that for any Poisson
point process, $J_{\rm{inhom}} \equiv 1$. For comparison, 
the plug in minus sampling estimator of $K_{\rm{inhom}}$ is shown as the
solid line in the top right frame. Again, the graph is close to that of the
theoretical value $\pi \, t^2$ (dashed line in the top right frame). 

\paragraph{Log Gaussian Cox process}
The second example is a log Gaussian Cox process. The defining Gaussian
random field has exponentially decaying correlation function, unit variance, 
and mean function $\mu$ satisfying
\(
e^{ \mu(x,y) } = 100 \, e^{ -y - {1}/{2} }.
\)
Note that the intensity function of the Cox process thus defined coincides 
with that of the Poisson point process discussed above.
A realisation is shown in the middle row's leftmost frame in 
Figure~\ref{F:Poisson}.  The middle frame in the same row show (\ref{e:Fhat})
(solid line) and (\ref{e:Ghat}) (dashed line). Note that the graph of (\ref{e:Ghat}) 
lies well below that of (\ref{e:Fhat}), indicative of attraction between points due
to the positive correlation of $Z$ after accounting for the inhomogeneity.
For comparison, the plug in minus sampling estimator of $K_{\rm{inhom}}$ is
shown as the solid line in the rightmost frame in the middle row. From about 
$t=0.13$, the estimated value is smaller than $\pi \, t^2$.

\paragraph{Thinned hard core process}
The third example is a thinned hard core (Strauss) process defined by its
conditional intensity $\beta \, 1\{ d(x, X\setminus \{x\} >  R \}$. 
A realisation for $\beta = 200$, $R=0.05$ and retention probability 
\(
p(x,y) = e^{-y}
\)
is shown in the bottom left frame in Figure~\ref{F:Poisson}.  
The middle frame in the bottom row show (\ref{e:Fhat}) (solid line) and 
(\ref{e:Ghat}) (dashed line). Note that the hard core distance is clearly
reflected in the flat initial segment in the graph of (\ref{e:Ghat}), which 
lies abovethe graph of (\ref{e:Fhat}) up to about $r=0.2$, indicative of the
inhibition between points due to that present in the underlying hard core
process after accounting for the inhomogeneity. For comparison, the plug in
minus sampling estimator of $K_{\rm{inhom}}$ is shown as the solid line in the
bottom right frame.  The estimated value is smaller than that of a 
Poisson point process up to about $t=0.2$ confirming the picture painted by the
$J_{\rm{inhom}}$-function approach.

\section{Summary and extensions}

In this paper, we defined a $J$-function for intensity-reweighted moment
stationary point processes in terms of their $n$-point correlation functions
and gave representations in terms of the generating functional and conditional
intensity. We calculated $J_{\rm{inhom}}$ explicitly for the three 
representative classes of intensity-reweighted moment stationary point
processes presented in \cite{BaddetalK00}, derived an estimator, and
presented simulation examples. 

Although this paper focussed on point processes on $\R^d$, the approach 
may be extended to space time or marked point processes. First,
assume that $Y$ is a simple point process on the product space $\R^d\times \R$ 
equipped with the supremum distance whose intensity function $\lambda(\cdot)$ 
exists and $\inf_{(x,t)} \bar\lambda(x,t) > 0$. Furthermore assume all order 
factorial moment measures exist as locally finite measures that have Radon--Nikodym 
derivatives $\rho^{(n)}$ with respect to the $n$-fold product measure of 
$\ell$ with itself, $n\in \N$, and the corresponding $n$-point correlation
functions are translation invariant in both components. Define $J_n$ as in 
Definition~\ref{d:J}, from which an inhomogeneous space time version of the 
$J$-function can be defined. If the series is truncated at $n=1$, one obtains
\[
J^{ST}_{\rm{inhom}}(t) - 1 \approx -\bar\lambda \int_{-t}^t \int_{||x||\leq t} 
\xi_2((0,0), (x,s)) \, dx \, ds ,
\]
which corresponds to the $K^*_{ST}$-approach of Gabriel and Diggle
\cite{GabrDigg09}.  If space and time are scaled differently, see
Section~\ref{S:scaling}, $J^{ST}_{\rm{inhom}}(t,s)$ becomes a function of
two variables, one for spatial distances, the other for time differences,
which is more natural in many applications.

For marked point processes on $\R^d$ with marks in some Polish space
$M$ equipped with a finite reference measure $\nu$, make the same 
assumptions as above for space time point processes except that the $n$-point
correlation functions are required to be translation invariant in the spatial
component only.
For any Borel set $B\subseteq M$ and $n\in\N$, set $J_n^B(t)$ equal to the 
common value of
\[
 \frac{1}{\nu(B)} \int_B\int_{B(0,t)\times M} \cdots \int_{B(0,t)\times M} 
 \xi_{n+1}((a,b), y_1 + a, \dots, y_n + a) \, d\nu(b) \, d\ell\times\nu(y_1)
 \cdots d\ell\times\nu(y_n)
\]
for almost all $a\in \R^d$ and define a family of inhomogeneous $J$-functions
with respect to the mark set $B$ as in Definition~\ref{d:J}. Under suitable
regularity conditions,
\[
J_{\rm{inhom}}^B(t) = \frac{G_B^{!0}(1 - u_t^0)}{G(1-u_t^0)},
\]
where $u_t^a(y=(x,m)) = \bar\lambda 1\{ x \in B(a,t) \} / \lambda(y)$ and
\[
G_B^{!x}( 1 - u_t^x) = \frac{1}{\nu(B)} \int_B \int \left[
  \prod_{y\in Y} (1-u_t^x(y)) \right] d\nu(b) d\PP^{!(x,b)}(Y),
\]
which can be estimated using minus sampling ideas.



\begin{thebibliography}{99}

\bibitem{Adle81}
Adler, R.J. (1981).
{\em The geometry of random fields}, 
John Wiley \& Sons, New York.

\bibitem{BaddetalJ00}
Baddeley, A.J., Kerscher, M., Schladitz, K.\ and Scott, B.T. (2000).
Estimating the $J$ function without edge correction,
{\em Statistica Neerlandica}, {\bf 54}, 315--328.

\bibitem{BaddetalK00}
Baddeley, A.J., M\o ller, J.\ and Waagepetersen, R. (2000).
Non- and semi-parametric estimation of interaction in inhomogeneous
point patterns,
{\em Statistica Neerlandica}, {\bf 54}, 329--350.

\bibitem{BedfBerg97}
Bedford, T.\ and Berg, J.\ van den (1997).
A remark on the Van Lieshout and Baddeley $J$-function for point processes,
{\em Advances in Applied Probability}, {\bf 29}, 19--25.

\bibitem{BermDigg89}
Berman, M.\ and Diggle, P.J. (1989).
Estimating weighted integrals of the second-order intensity of a spatial point process,
{\em Journal of the Royal Statistical Society Series B}, {\bf 51}, 81--92.

\bibitem{Chen03}
Chen, J. (2003).
{\em Summary statistics in point patterns and their applications},
Ph.D.\ thesis, Curtin University of Technology.

\bibitem{DaleVere88}
Daley, D.J.\ and Vere-Jones, D. (1988).
{\em An introduction to the theory of point processes},
Springer Verlag, New~York, Second edition Volume I, {\em 
Elementary theory and methods}, 2003, Volume II, {\em
General theory and structure}, 2008.

\bibitem{FoxBadd02}
Foxall, R.\ and Baddeley, A.J. (2002).
Nonparametric measures of association between a spatial point process
and a random set, with geological applications,
{\em Journal of the Royal Statistical Society Series C},  {\bf 51}, 165--182.

\bibitem{GabrDigg09}
Gabriel, E.\ and Diggle, P.J. (2009).
Second-order analysis of inhomogeneous spatio-temporal point process data,
{\em Statistica Neerlandica}, {\bf 63}, 43--51.

\bibitem{Gelf10}
Gelfand, A.E., Diggle, P.J, Fuentes, M. and Guttorp, P. (Eds.) (2010).
{\em Handbook of spatial statistics},
CRC Press/Chapman and Hall, Boca Raton.

\bibitem{Hahnetal03}
Hahn, U., Jensen, E.B.V., Lieshout, M.N.M. van and Nielsen, L.S. (2003).
Inhomogeneous spatial point processes by   location dependent scaling,
{\em Advances in Applied Probability (SGSA)}, {\bf 35}, 319--336.

\bibitem{Illietal08}
Illian, J.,  Penttinen, A., Stoyan, H.\ and Stoyan, D. (2008).
{\em Statistical analysis and modelling of spatial point patterns},
John Wiley \& Sons, Chichester.

\bibitem{Kers98}
Kerscher, M. (1998).
Regularity in the distribution of superclusters?
{\em Astronomy and Astrophysics}, {\bf 336}, 29--34.

\bibitem{Kersetal98}
Kerscher, M., Schmalzing, J., Buchert, T.\ and Wagner, H. (1998).
Fluctuations in the IRAS 1.2 Jy catalogue,
{\em Astronomy and Astrophysics}, {\bf 333}, 1--12.

\bibitem{Kersetal99}
Kerscher, M., Pons--Border\'ia, M.J., Schmalzing, J.,
Trasarti--Battistoni, R., Buchert, T., Mart\'inez, V.J.\ and
Valdarnini, R. (1999).
A global descriptor of spatial pattern interaction in the galaxy distribution,
{\em Astrophysical Journal}, {\bf 513}, 543--548.

\bibitem{Lies00}
Lieshout, M.N.M. van (2000).
{\em Markov point processes and their applications},
Imperial College Press/World Scientific Publishing, 
London/Singapore.

\bibitem{Lies06}
Lieshout, M.N.M. van (2006).
A J-function for marked point patterns,
{\em Annals of the Institute of Statistical Mathematics}, 
{\bf 58}, 235--259.

\bibitem{LiesBadd96}
Lieshout, M.N.M.\ van and Baddeley, A.J. (1996).
A nonparametric measure of spatial interaction in point patterns,
{\em Statistica Neerlandica}, {\bf 50}, 344--361.

\bibitem{LiesBadd99}
Lieshout, M.N.M. van and Baddeley, A.J. (1999).
Indices of dependence between types in multivariate point patterns,
{\em Scandinavian Journal of Statistics}, {\bf 26}, 511--532.

\bibitem{Molletal98}
M\o ller, J., Syversveen, A.R.\ and Waagepetersen, R.P. (1998).
Log Gaussian Cox processes,
{\em Scandinavian Journal of Statistics}, {\bf 25}, 451--482.

\bibitem{Paul02}
Paulo, M.J. (2002).
{\em Statistical sampling and modelling for cork oak and eucalyptus stands},
Ph.D.\ thesis, Wageningen University.

\bibitem{Peeb80}
Peebles, P.J.E. (1980).
{\em The large-scale structure of the universe},
Princeton University Press.

\bibitem{Pitt82}
Pitt, L.D. (1982)
Positively correlated normal variables are associated,
{\em Annaps of Probability}, {\bf 10}, 496--499.

\bibitem{Steietal01}
Stein, A., Lieshout, M.N.M. van and Booltink, H.W.G. (2001).
Spatial interaction of methylene blue stained soil pores,
{\em Geoderma}, {\bf 102}, 101--121.

\bibitem{Stoyetal87}
Stoyan, D., Kendall, W.S.\ and Mecke, J. (1987).
{\em Stochastic geometry and its applications},
Akademie-Verlag, Berlin, 1987, Second edition 1995.

\bibitem{ThonLies99}
Th\H{o}nnes, E.\ and  Lieshout, M.N.M. van (1999).
A comparative study on the power of Van Lieshout and
Baddeley's $J$--function,
{\em Biometrical Journal}, {\bf 41}, 721--734.

\bibitem{Whit79}
White, S.D.M. (1979).
The hierarchy of correlation functions and its relation to other
measures of galaxy clustering,
{\em Monthly Notices of the Royal Astronomical Society}, 
{\bf 186}, 145--154.

\end{thebibliography}
\end{document}